\newcommand{\notimplies}{%
  \mathrel{{\ooalign{\hidewidth$\not\phantom{=}$\hidewidth\cr$\implies$}}}}
\newtheorem{theorem}{Theorem}
\DeclareMathOperator{\im}{im}
\definecolor{DarkGreen}{RGB}{0,130,0}
\DeclareMathOperator{\Tr}{Tr}
\newcommand{\bdy}{\mathcal{B}}
\newcommand{\bulk}{\mathcal{M}}
\newcommand{\regA}{\mathcal{A}}
\newcommand{\regAc}{{\regA^c}}
\newcommand{\rhoA}{{\rho_{\regA}}}
\newcommand{\entsurf}{\partial \regA}
\newcommand{\homsurfA}{\mathcal{R}_\regA}
\newcommand{\extr}{\mathcal{E}}
\newcommand{\fixM}{\mathbf{e}}
\newcommand{\intM}{\mathbf{r}}
\newcommand{\uE}{u}
\newcommand{\disk}{\mathcal{D}}
\newcommand{\extrhom}{e}
\newcommand{\reghom}{a}
\newcommand{\cutout}[2]{#1- #2}
\newcommand{\bdyc}{\cutout{\bdy}{\entsurf}}
\newcommand{\bdycCover}{\cutout{\tilde{\bdy}_q}{\entsurf}}
\newcommand{\bulkcfix}[1]{\cutout{\bulk_{#1}}{\fixM_{#1}}}
\newcommand{\bulkcCover}[1]{\cutout{\tilde{\bulk}_{#1}}{\fixM_{#1}}}
\definecolor{darkgreen}{rgb}{0,0.45,0.2}
\title{Topological aspects of generalized gravitational entropy}
\author[a]{Felix M. Haehl\,}
\author[b]{\!, Thomas Hartman\,}
\author[c]{\!, Donald Marolf\,}
\author[a]{\!, Henry Maxfield\,}
\author[a]{\!, Mukund Rangamani\,}
\affiliation[\,a]{Centre for Particle Theory \& Department of Mathematical Sciences,\\
                     Durham University, South Road, Durham DH1 3LE, UK.}
\affiliation[\,b]{Department of Physics,\\
                     Cornell University, Ithaca, New York 14853, USA.}
\affiliation[\ c]{Department of Physics,\\
                     University of California, Santa Barbara, Santa Barbara, CA 93106, USA.}
\emailAdd{f.m.haehl@gmail.com}
\emailAdd{hartman@cornell.edu}
\emailAdd{marolf@physics.ucsb.edu}
\emailAdd{h.d.maxfield@durham.ac.uk}
\emailAdd{mukund.rangamani@durham.ac.uk}
\abstract{The holographic prescription for computing entanglement entropy requires that the bulk extremal surface, whose area encodes the amount of entanglement, satisfies a homology constraint. Usually this is stated as the requirement of a (spacelike) interpolating surface that connects the region of interest and the extremal surface. We investigate to what extent this constraint is upheld by the generalized gravitational entropy argument, which relies on constructing replica symmetric
$q$-fold covering spaces of the bulk, branched at the extremal surface. We prove (at the level of topology) that the putative extremal surface satisfies the homology constraint if and only if the corresponding branched cover can be constructed for every positive integer
$q$. We give simple examples to show that homology can be violated if the cover exists for some values $q$ but not others, along with some other issues.
}
\begin{document}
	\begin{flushright} \small{DCPT-14/75} \end{flushright}
	
\maketitle
\flushbottom

\section{Introduction}
\label{sec:introduction}

Holography provides an intriguing connection between quantum information and geometry,  in part inspired by the geometrization of quantum entanglement by the Ryu-Takayanagi (RT) proposal \cite{Ryu:2006bv,Ryu:2006ef} and its covariant generalization \cite{Hubeny:2007xt} (HRT). These proposals identify a particular bulk codimension-2 surface as the geometric encoder of the entanglement structure in the dual field theory.  Whilst the original proposals owed their origins to analogies with black hole entropy and covariant entropy bounds, we now have a derivation of the RT formula (in situations with time reflection symmetry) from a gravitational path integral courtesy of Lewkowycz and Maldacena (LM) \cite{Lewkowycz:2013nqa}.

This may at first seem like a complete story for the time-reflection symmetric states of a holographic QFT. Although it requires certain assumptions, the LM construction gives a first principles derivation of how the minimal surface of the RT proposal comes about from a gravitational path integral. But while the LM construction captures the dynamical part of the RT conjecture, to our knowledge it has yet to be established whether the quantum gravity path integral employed by LM is cognizant of the topological constraints that must be imposed on the RT surface. This is the primary question that will concern us in this paper.

To appreciate the issues involved, recall that RT (HRT) proposes that the holographic entanglement entropy of a given region $\regA$ of the field theory is given by the area of a minimal (extremal) bulk surface $\extr$ anchored on the boundary $\entsurf$ of $\regA$, with $\regA$, $\entsurf$ considered to lie in the boundary of the bulk spacetime.
The extremal surface $\extr$ is required to be homologous to the region $\regA$ in question \cite{Fursaev:2006ih,Headrick:2007km}. The cleanest phrasing of this statement to our knowledge appears in
\cite{Headrick:2013zda}, \cite{Headrick:2014cta}  for the RT and HRT proposals respectively. One requires that there be a spacelike codimension-1 interpolating homology surface $\homsurfA$ whose only boundaries are $\extr$ and $\regA$.
Without this constraint, the holographic formula would be at odds with known quantum mechanical properties of entanglement entropy, such as strong subadditivity; the topological prescription has been motivated as a natural, but \textit{ad hoc} way of ensuring consistency. The simplest rationale for the homology constraint is that, in its absence, a subsystem $\regA$ and its complement $\regA^c$ can end up having the same entanglement even when the state of the system is impure (e.g., in a black hole geometry).

One can intuitively motivate this picture by realizing that one must introduce a cut for fields along
$\regA$ when computing matrix elements of the reduced density matrix $\rhoA$ via the path integral.  In particular, the replica construction for computing R\'enyi entropies requires that the operators of the QFT are cyclically permuted when one crosses the cut. Since local QFT operators are the boundary values of bulk fields, one expects the cut to extend into the bulk as well. The boundary of the bulk branching surface $\homsurfA$ is formed from $\extr$ and $\regA$ so that the homology constraint is satisfied; see Fig.~\ref{fig:replica} for an illustration. Existence of the homology surface $\homsurfA$ ensures that fields are appropriately branched and one can treat the bulk geometry itself as a branched cover over some fundamental domain.

Based on the above arguments, one might imagine
 that the converse holds true trivially at the level of topology.
 That is, given a replica-symmetric bulk saddle with the correct boundary conditions, might we be guaranteed to find a codimension-2 defect and a codimension-1 interpolating surface implementing the homology constraint?
 Surprisingly, this turns out not to be true. One can construct bulk geometries with the requisite replica symmetry but which nevertheless do not admit an appropriate branching surface $\homsurfA$. While it is plausible that such geometries are never dominant in the bulk path integral, their presence begs the question ``under what conditions does the LM construction give rise to the homology constraint?'' We will address this question in some detail below.  A simplified version of our final statement is that, as long as one has a family of geometries parameterized by a real parameter $q$ which for all positive integer values of $q$ admits a branched cover description, then we recover the homology constraint from the LM construction.

 Before beginning, we pause to dispel a potential confusion.  The reader may well ask whether the above condition coincides with the assumptions actually made by Lewkowycz and Maldacena in their original paper \cite{Lewkowycz:2013nqa}.  The potential confusion arises from the fact that the phrasing of the assumptions in \cite{Lewkowycz:2013nqa} is subject to at least two different interpretations, which are naturally termed `global' and `local'. Indeed, the present authors do not agree among themselves as to which interpretation best fits the words written in \cite{Lewkowycz:2013nqa}.   Under the global interpretation, LM assumes that the replica geometries may be analytically continued to $q \sim 1$ with global topology outside the conical singularity given by a trivial
 ${\bf S}^1$ bundle ${\bf S}^1 \times X$ for some $X$.  This global assumption coincides with the intuitive discussion above and immediately implies the homology constraint, as the homology surface is given by any global section of this trivial bundle.

 However, the global assumption is rather stronger than one might like.  As mentioned above, there is no a-priori reason why a ($q$-fold quotient of a) given replica-symmetric solution at integer $q$ should have this structure.  Furthermore, as may be readily seen by noting that there is no obstacle to constructing bulk solutions with small conical singularities on homology-violating surfaces, the global product structure is not required by the assumption that the $q \to 1$ limit of the bulk solutions be described by small conical singularities.  Such considerations motivates us to focus on the alternate and weaker local interpretation of the LM assumptions, which does not determine the global structure  and imposes the ${\bf S}^1 \times X$ structure only locally along the conical singularity.  The implications are two-fold.  First, as is necessary to allow LM to work in topologically nontrivial examples (such as computing the entropy of an interval on a torus), the ${\bf S}^1$ fiber need not be defined far from the singularity.  Second, even the region near-but-outside the singularity is allowed to be a non-trivial ${\bf S}^1$ bundle over some $X$. Our main result can now be restated as saying that this local interpretation of the LM assumptions \textit{also} implies the homology constraint so long as the family of geometries exists at all $q\geq1$ and is an appropriate $\mathbb{Z}_q$ quotient of a smooth $q$-fold replica at all integer $q$.

The outline of the paper is as follows: in \S\ref{sec:review} we give a rapid overview of the concepts we need from QFT and holography vis-\`a-vis entanglement considerations. We also formulate the above issue concerning the homology constraint in detail. In \S\ref{sec:crosscap} we present counterexamples to the na\"ive intuition that the homology constraint is an automatic consequence of existence of a branched cover. In \S\ref{sec:consistency} we formulate an essential topological consistency condition for branched covers as constructed by LM. This condition is a statement about consistency of the field theory replica trick with the existence of bulk branched covers at all integer values of $q$. We then illustrate and prove in \S\ref{sec:homology} that this topological consistency condition is in fact equivalent to the homology constraint on the RT prescription. We conclude with a discussion in \S\ref{sec:discussion}. Further technical details and a review of algebraic topology relevant for our proof can be found in the appendices.

\section{Review of holographic entanglement}
\label{sec:review}

To set the stage for our discussion, let us quickly review the salient features of the RT/HRT and LM constructions that will play a role below. Consider a holographic $d$-dimensional QFT with a bulk gravity dual in asymptotically $d+1$ dimensional AdS spacetime. We will limit our discussion to situations where the QFT is planar, strongly coupled, and has a suitable gap in its spectrum, so that we can regard the bulk as a two derivative effective field theory which we take to be Einstein-Hilbert gravity coupled to matter degrees of freedom.\footnote{ The discussion generalize​s​ to other planar QFT​s​ dual to (classical) higher derivative  gravitational theories following \cite{Dong:2013qoa, Camps:2013zua}​, for which we would be evaluating some other local functional on a codimension-2 surface $\extr$. As we primarily focus in on topological aspects, much of what we describe later will go through ​unmodified​.}

\subsection{The RT and LM constructions}
\label{sec:LMRT}

The QFT resides on a background geometry $\bdy$  foliated by  Cauchy surfaces
$\Sigma_t$.\footnote{\label{foot:tsym} We often use language appropriate for Lorentzian spacetimes despite focusing on Euclidean geometries. We will be interested in surfaces $\Sigma_{t=0}$ which sit at a moment of ${\mathbb Z}_2$ time-reflection symmetry (allowing thereby translation between the two cases by a suitable Wick rotation).}  The dual gravitational background to this field theory is $\bulk$ and has $\bdy$ as its conformal boundary. Specifically, we will take $\bulk$ (and all other bulk regions) to denote the conformal compactification, so that $\bulk$ is a manifold with boundary: $\partial \bulk = \bdy$. See Table \ref{tab:notation} for an overview of our notation.

\begin{table}[h]
\centering
\begin{tabular}{| c | l | c |}
\hline
\multicolumn{3}{|c|}{{\cellcolor{gray!13} Boundary regions}}\\
\hline\hline
 Symbol & Description & Dimension \\
 \hline
 $\bdy$ & full boundary manifold & $d$ \\
 $\Sigma_t$ & fixed-time slice & $d-1$ \\
 $ \regA$ & subregion of $\Sigma_t$ & $d-1$ \\
 $\entsurf$ & entangling surface & $d-2$\\
 $ \tilde{\bdy}_q$ & $q$-fold branched cover of $\bdy$ used in the replica trick & $d$ \\
\hline\hline
\multicolumn{3}{|c|}{{\cellcolor{gray!13} Bulk regions}}\\
\hline\hline
 Symbol & Description & Dimension \\
 \hline
 $\bulk$ & full bulk manifold (with $\partial \bulk = \bdy$) & $d+1$ \\
 $\extr$ & RT minimal surface (with $\partial \extr = \partial \regA$) & $d-1$ \\
 $\homsurfA$ & homology surface interpolating between $\extr$ and $\regA$ & $d$ \\
 $\tilde{\bulk}_q$ & smooth bulk replica manifold (with singular $\partial \tilde{\bulk}_q = \tilde{\bdy}_q$) & $d+1$ \\
 $\bulk_q$ & fundamental domain of $\tilde{\bulk}_q$ when $\tilde{\bulk}_q$ is a branched cover (i.e.\ $\tilde{\bulk}_q/\mathbb{Z}_q$)& $d+1$\\
 $\fixM_q$ & branching surface of the branched cover $\tilde{\bulk}_q \to \bulk_q$ & varies \\
 $\intM_q$ & homology surface interpolating between $\fixM_q$ and $\regA$ (if it exists) & $d$ \\
\hline
\end{tabular}
\caption{Definition of boundary and bulk regions that we consider in the course of our discussion.}
\label{tab:notation}
\end{table}

We are interested in computing the entanglement entropy in such a holographic QFT for a region $\regA \subset \Sigma_t$ of the boundary geometry. The HRT prescription for computing holographic entanglement entropy requires us to consider a bulk codimension-2 extremal  surface  $\extr \subset \bulk$ anchored on $\entsurf$ and takes the area of $\extr$ to give the boundary entaglement, i.e.,
\begin{equation}
S_\regA = \frac{\text{Area}(\extr)}{4 \,G_N}\,, \qquad
\partial \extr = \extr \cap \bdy = \entsurf
\,.
\label{eq:SAbulk}
\end{equation}
 Crucially the bulk extremal surface is required to satisfy a {\em homology constraint}  originally motivated in  \cite{Headrick:2007km}.  Usually this is stated as the requirement that the extremal surface must be homologous to the region $\regA$. More precisely, following e.g., \cite{Headrick:2013zda,Headrick:2014cta}  we will take this to mean that there exists a bulk codimension-1 spacelike surface $\homsurfA$ which interpolates between the extremal surface and the boundary region of interest. To wit,
\begin{equation}
\exists \;\homsurfA \subset \bulk : \quad\partial \homsurfA = \extr \cup \regA \,.
\label{eq:radef}
\end{equation}	
In the RT/HRT constructions, taking $\regA$ to be spacelike is an additional restriction on the allowed minimal/extremal surfaces, though it is naturally incorporated in the maximin proposal of  \cite{Wall:2012uf}.
We will have much more to say about this constraint below.

We need one more ingredient to make contact with the LM path integral derivation.  This ingredient is the replica trick for computing powers of the reduced density matrix $\rho_\regA$ whose von Neumann entropy is the entanglement entropy under discussion.  We define
\begin{equation}
\rho_\regA = {\rm Tr}_{\regAc} \rho \,, \qquad
S_\regA = -\Tr\left(\rho_\regA \log \rho_\regA\right) \,,
\label{eq:SA}
\end{equation}	
where $\rho$ is the total density matrix on $\Sigma_t = \regA \cup \regAc$. After setting $t=0$ as in footnote \ref{foot:tsym} and passing to Euclidean signature, $\rho$ can be viewed as a state prepared by some path integral over $\bdy$ (now denoting a Euclidean boundary) cut along $\Sigma_{t=0}$ where boundary conditions are imposed on fields at $\Sigma_{t=0}$ to compute particular matrix elements of $\rho$. Up to a normalizing factor, the trace over $\regAc$ to obtain $\rho_\regA$ is implemented by sewing up the part of the cut along $\regAc$, leaving $\bdy$ with a cut only along $\regA$. The replica construction in the QFT then proceeds by sewing $q$ copies together cyclically along the cuts at ${\cal A}$ to construct a singular manifold $\tilde{\bdy}_q$ whose partition function computes $\text{Tr} (\rho_\regA^q)$. This then allows us to recover the $q^{\rm th}$ R\'enyi entropy of the QFT via:
\begin{equation}
S_\regA^{(q)}
= \frac{1}{1-q} \, \log \text{Tr} (\rho_\regA^q) = \frac{1}{1-q} \log \frac{Z_q}{Z_1^q} \,,
\label{eq:qrenyi}
\end{equation}	
where $Z_q$ is the partition function of the QFT on $\tilde{\bdy}_q$ and $Z_1$ that on $\bdy_1 \equiv \bdy$. The entanglement entropy $S_\regA$ of \eqref{eq:SA} is recovered in the limit $q\to 1$.

The LM construction first implements this computation of R\'enyi entropies holographically by extending the replica trick into the bulk.  It then extracts the entanglement entropy as above by giving a geometric implementation of the continuation to non-integer $q$.

To compute $Z_q$, one proceeds by obtaining a bulk manifold $\tilde{\bulk}_q$ with boundary $\tilde{\bdy}_q$ (for some explicit examples see \cite{Faulkner:2013yia,Barrella:2013wja}); as always, the partition function is simply given by the on-shell gravitational action computed on this geometry. This bulk computation follows the usual rules of Euclidean quantum gravity and the LM saddle point analysis remains valid when we analytically continue  $q\in  {\mathbb Z}_+ \mapsto q \in {\mathbb R}$, as long as $(q-1)\, \ell_{AdS}/\ell_{Planck} \gg 1$.

To get from the replica spacetimes $\tilde{\bulk}_q$ of LM to the RT minimal surface one proceeds as follows. Since the boundary geometry $\tilde{\bdy}_q$ is a $q$-fold cover over $\bdy$ branched at $\entsurf$ with cyclic ${\mathbb Z}_q$ symmetry, we can restrict attention to a single `fundamental domain' by focusing on the quotient spacetime $\tilde{\bdy}_q/{\mathbb Z}_q$; this is just a copy of $\bdy$ itself. Assuming the bulk saddle point geometry $\tilde{\bulk}_q$ to respect replica symmetry,\footnote{
The ${\mathbb Z}_2$  time-reflection symmetry about $t=0$ of the state $\rho$ intertwines with the cyclic
${\mathbb Z}_q$ symmetry of the replica construction, to give a larger dihedral symmetry group ${\mathbb D}_q$; see \cite{Headrick:2012fk} for its relevance in computing R\'enyi entropies. We refrain from utilizing the full dihedral symmetry, allowing for the possibility that the LM construction gives a surface that does not lie at $t=0$ in the bulk. Therefore, in what follows, replica symmetry will always refer to the cyclic ${\mathbb Z}_q$ group.} we may similarly consider the bulk quotient $\bulk_q=\tilde{\bulk}_q/{\mathbb Z}_q$. LM focus on the case where the action of ${\mathbb Z}_q$ on $\tilde{\bulk}_q$ has a codimension-2 fixed point set $\fixM_q$ with boundary $\entsurf$.  This $\fixM_q$ is to be identified as the progenitor of the extremal surface $\extr$.  It is assumed to result in a conical defect of angle $\frac{2\pi}{q}$ in $\bulk_q$. The desired QFT partition function $Z_q$ on $\tilde{\bdy}_q$ is then $q$ times the bulk action on $\bulk_q$, computed without a contribution from the conical defect.

The point of considering this quotient is that it allows continuation to arbitrary real values of $q$. The protocol is to find a geometry $\bulk_q$ with boundary $\bdy$ and from which a `singular' codimension-2 surface $\fixM_q$ ending at $\entsurf$ has been excised.  One then imposes as a further boundary condition that $\fixM_q$ is a conical defect of opening angle $\frac{2\pi}{q}$. The geometry is fixed by minimizing the action subject to this requirement, with no contribution to the action from the singularity.  In the $q\to 1$ limit we require $\bulk_q \to \bulk$.  The defect surface $\fixM _q$ then becomes the minimal area surface $\extr$ in the Euclidean geometry. One can furthermore argue that the contribution to $Z_q$ localizes on this surface, giving a correction to the action proportional to the area, in such a way that the area of the extremal surface computes the entanglement entropy.

\subsection{A question of homology}
\label{sec:LMtopology}

The LM construction shows that the computation of R\'enyi entropies for arbitrary positive real $q$ can be performed by finding geometries $\bulk_q$ with boundary $\bdy$, but with a conical defect of angle $\frac{2\pi}{q}$. At integer values of $q$, via $\bulk_q = \tilde{\bulk}_q/\mathbb{Z}_q$ this should be related to a nonsingular replicated bulk $\tilde{\bulk}_q$ whose boundary is the replica $\tilde{\bdy}_q$ (on which this $\mathbb{Z}_q$ acts as the replica symmetry). When this is the case, we say that $\bulk_q$ lifts to a $q$-fold branched cover. Of primary interest to us is the relationship between the homology constraint on the one hand, and this lifting of the singular bulk geometry to a nonsingular replicated bulk on the other.

The bulk conical defect, coming from the fixed point set of the replica symmetry, is the codimension-2 surface $\fixM_q$ anchored on the boundary at $\entsurf$. On the boundary $\tilde{\bdy}_q$, upon traversing a small loop around $\entsurf$, one passes through $\regA$, and goes from one copy of $\bdy$ to the next. It is tempting to imagine a natural picture of the full geometry that arises from continuing this reasoning into the bulk as follows: At the level of topology, $\tilde{\bulk}_q$ is formed by sewing together $q$ copies of $\bulk_q$  along some $\fixM_q$ such that traversing a small loop around $\fixM_q$ also results in a change of sheet in the cover $\tilde{\bulk}_q$ of $\bulk_q$.  We shall investigate the correctness of this picture below.

This picture is straightforward in cases where there is a codimension-1 interpolating surface $\intM_q$ bounded by $\fixM_q \,\cup\, \regA$: in other words, if the homology constraint is satisfied by the conical defect. Cutting along $\intM_q$, and gluing together $q$ copies cyclically, just as on the boundary, builds the covering space $\tilde{\bulk}_q$ at the level of topology. Passing through $\intM_q$ will cause a change of sheet in the cover, just as passing through $\regA$ changes sheets on the boundary. So any $\extr$ obeying  the homology constraint\footnote{The natural homology constraint to assume is the one in Lorentz signature.  We will take the Lorentz signature spacetime to be time-orientable.  It then admits a nowhere-vanishing vector field along which $\homsurfA$ can be deformed until it lies in the moment of time symmetry.  It follows that the homology constraint also holds in Euclidean signature.  For future reference, we also note that Lorentz-signature time-orientability makes $\homsurfA$ two-sided, i.e., it has a continuous unit normal. In case of an orientable bulk, two-sidedness is equivalent to orientability of $\homsurfA$.} naturally leads to a family of bulk saddles $\tilde{\bulk}_q$ obeying the correct boundary conditions.
We may say that $\bulk$ lifts to $\tilde{\bulk}_q$, with $\extr$ lifting to $\fixM_q$.\footnote{When
all surfaces of interest lie on the $t=0$ slice, $\homsurfA$ is uniquely determined by $\extr$ and $\regA$ as long as that slice  has no closed, boundaryless components. More generally, $\homsurfA$ is unique if
$H_{d}(\bulk) =0$, which we expect to hold in most physical situations (see, for example \cite{Witten:1999xp}).
If $H_d{(\bulk}) \neq 0$ then there is an ambiguity, though it will not make a difference to the entanglement entropy in the classical limit (but it will matter for R\'enyi entropies and for quantum corrections).
}
See Fig.~\ref{fig:replica} for an illustration of this scenario.
\begin{figure}
\setlength{\unitlength}{0.1\columnwidth}
\centerline{\includegraphics[width=\textwidth]{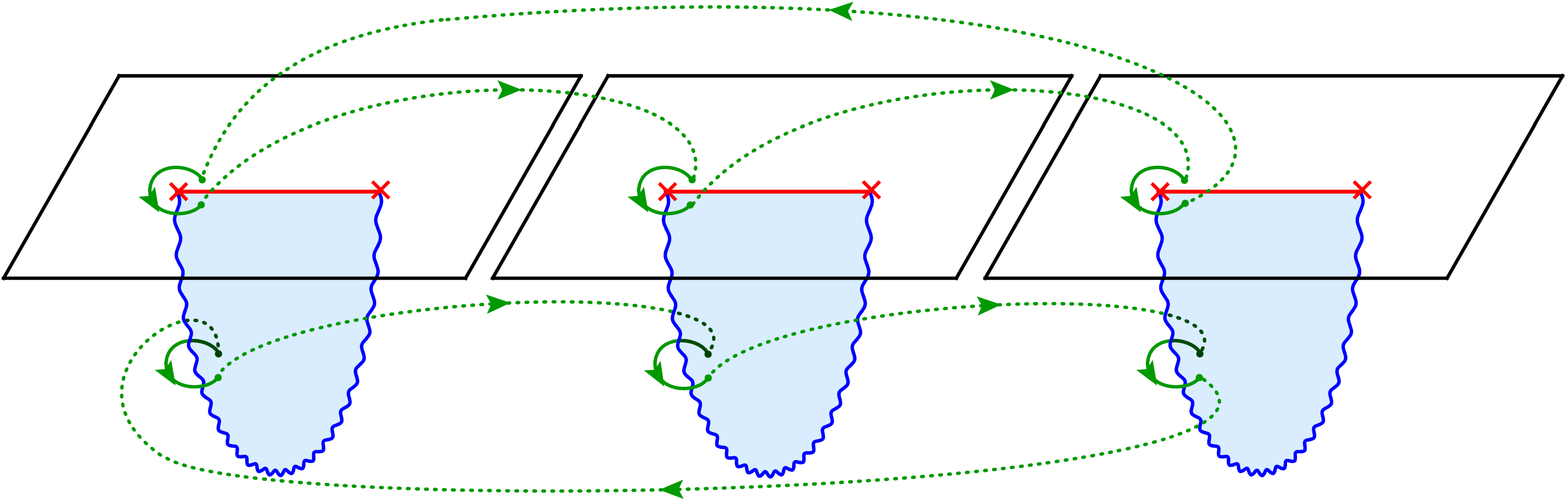}}
\begin{picture}(0.3,0.4)(0,0)
\put(1.8,2.5){\minibox{{\color{red}$\regA$}}}
\put(2.35,1){\minibox{{\color{blue}$\fixM_3$}}}
\put(1.7,1.15){\minibox{{\color{cyan}$\intM_3$}}}
\put(.75,2.85){\minibox{$\bdy$}}
\end{picture}
\caption{Replica construction in the boundary and bulk for $q=3$. The replica symmetric $q$ copies of the field theory on $\bdy$, form a $q$-fold branched cover $\tilde{\bdy}_q$ which fixes the asymptotic data for the bulk problem.
The bulk covering spacetime $\tilde{\bulk}_q$ has a ${\mathbb Z}_q$ symmetry with fixed point locus $\fixM_q$  (shown as the wavy lines) anchored on $\entsurf$. Typically one also encounters via this construction a bulk interpolating surface  $\intM_q$ (the light blue branching surface) in the bulk whose boundaries are $\fixM_q$ and $\regA$.
Conventional intuition dictates that the bulk spacetimes are all  covers over a single fundamental domain (one of the components in the picture) branched over the codimension-1 surface $\intM_q$. Passing through this surface cycles through the sheets of the bulk in a fashion identical to passage through $\regA$.
The homology condition posits that such an $\intM_q$ exists. We argue that this picture is accurate as long as we are suitably careful with the notion of allowed branched covers.  As $q \rightarrow 1$, $\intM_q \rightarrow \homsurfA$ and $\fixM_q \rightarrow \extr$.
}
\label{fig:replica}
\end{figure}

However, the converse, or at least the strongest converse one might propose, fails to be true. More specifically, the following three statements will be demonstrated in sections \S\ref{sec:crosscap}, \S\ref{sec:consistency}, and \S\ref{sec:homology}.
\begin{enumerate}
\item[(i).] The existence of a two-sided interpolating codimension-1 surface $\intM_q$ implies that $\bulk_q$ lifts to a branched cover (homology $\implies$ lift). This is what we have informally argued above.
\item[(ii).] There are branched covers with the correct boundary conditions which cannot be realized in this way.  That is, for given $q \in \mathbb{Z}_+$, there can exist an $\bulk_q$ formed from a quotient of a branched cover which does not admit an interpolating surface $\intM_q$ between $\regA$ and the fixed point set $\fixM_q$ (lift $\notimplies$ homology).
\item[(iii).] However, given a continuous family of bulk geometries $\bulk_q$ parameterized by real $q$ which for every
$q\in {\mathbb Z}_+$ lifts to a $q$-fold branched cover, we will show that each $\bulk_q$ admits an interpolating surface $\intM_q$.  Taking $q \rightarrow 1$ then shows that $\extr$ satisfies the homology constraint as desired (lift $\forall \; q \implies $ homology).
\end{enumerate}

We also note that, on top of this, the fixed point set arising from a $\mathbb{Z}_q$ quotient may give rise to something other than a $\frac{2\pi}{q}$ conical defect. It is possible to generate fixed point sets with the wrong codimension, and also to engineer situations wherein the codimension-2 fixed point set has an incorrect defect angle. In \S\ref{sec:crosscap} we give examples where both these scenarios can be realized.

Below we generally confine ourselves to topological arguments. In particular, we refrain from employing dynamical information from the path integral to constrain $\bulk_q$. This is in part due to the fact that classification of all replica invariant saddles with given boundary geometry is a notoriously hard problem (even for $d=2$).\footnote{ The kinematical aspect of our analysis is reminiscent of the first attempt to prove the RT proposal in \cite{Fursaev:2006ih} where the homology condition was introduced. But while \cite{Fursaev:2006ih} argued that the bulk should be a branched cover satisfying homology, we would instead like to ascertain the conditions under which the homology condition becomes automatic. We thank Matt Headrick for a discussion on this issue.}
In the same spirit, the notion of replica symmetry should always be understood in a topological sense.  Formally, the bulk will be a multi-sheeted surface -- we refrain from specifying the detailed geometry on the sheets, but do keep track of what the boundary conditions of the gravitational path integral imply for moving across the sheets (see \S\ref{sec:consistency}).

\section{Exemplifying replica symmetric homology violation}
\label{sec:crosscap}

We begin by considering some simple examples, focusing only on the topology, which illustrates some of the unexpected features that may be encountered.

\subsection{A torus with a crosscap}

Suppose we wish to compute the entropy of the thermal state of a two-dimensional field theory living on a circle (or equivalently the entanglement entropy between the two halves of the thermofield double). The state is defined by a path integral over a cylinder of length $\beta=1/T$. The $q^{\rm th}$ R\'enyi entropy can be computed by the partition function on $q$ copies of the cylinder glued cyclically, which gives a torus $\tilde{\bdy}_q$ of length $q\beta$. The replica symmetry is implemented by rotating through $\beta$ in the Euclidean time direction. The region $\regA$ of interest becomes a spatial circle at fixed Euclidean time, with empty boundary.

To compute the partition function holographically, one must choose how the torus is to be `filled in' to obtain a bulk $\tilde{\bulk}_q$. One option is to fill the spatial circle with a disk, like thermal AdS, in which case there are no fixed points under the replica symmetry, so the quotient $\bulk_q$ is smooth, and the entanglement entropy obtained would vanish. The region $\regA$ is contractible in the bulk, so this is consistent with the homology constraint.

If, on the other hand, the Euclidean time circle is filled in with a disk, as in the BTZ black hole geometry, $\regA$ is not contractible. But the centre of the disk is fixed under the rotation implementing replica symmetry.  So it gives rise to a circle $\fixM_q$ of fixed points and a conical defect in the quotient $\bulk_q$.  This defect becomes the bifurcation circle of the event horizon in the $q\to1$ limit. There is an obvious interpolating surface joining it to $\regA$ on the boundary, so again the homology constraint is obeyed.

However, ignoring for now the equations of motion,
there are more topologies that might in principle be allowed. One is to fill the Euclidean time circle not with a disk, but using a cross-cap. (In what follows, the spatial circle will play little role, so we focus on some 2-dimensional slice corresponding to a point on this circle.) This means that we first fill it with an annulus, taking the outside edge of the annulus to be the boundary, but we then identify antipodal points of the inside edge so that passing through the inside edge causes one to jump to the point directly opposite. The replica symmetry can be implemented by rotation of the cross-cap in the obvious manner. This gives a space $\tilde{\bulk}_q$ with the topology of a M\"obius strip  times the spatial circle. The (single) edge of the M\"obius strip is the boundary Euclidean time circle.

\begin{figure}
\setlength{\unitlength}{0.1\columnwidth}
\centerline{\includegraphics[width=.91\textwidth]{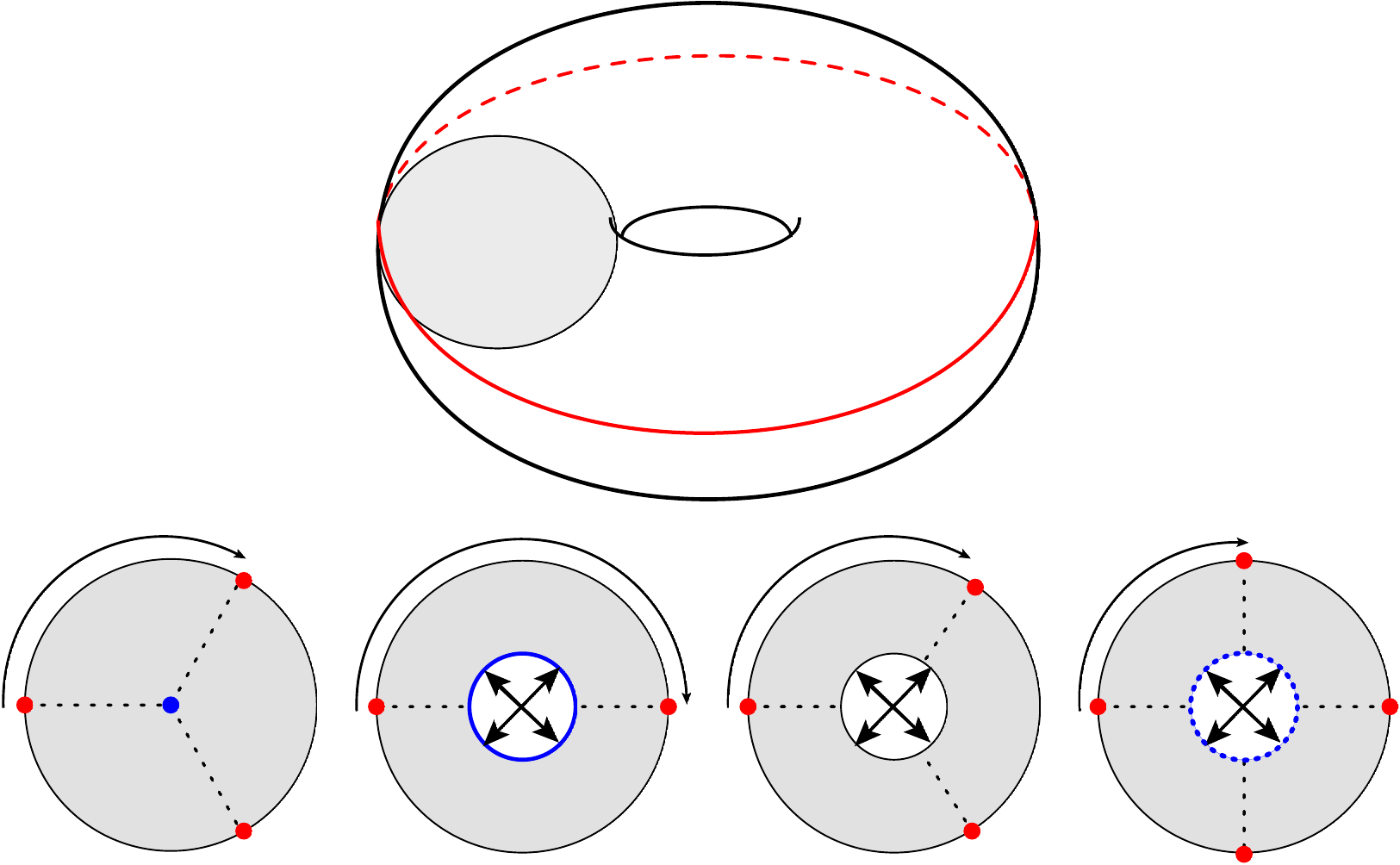}}
\begin{picture}(0.3,0.4)(0,0)
\put(4.85,3.4){\minibox{{\color{red}${\regA}$}}}
\put(1.7,1.4){\minibox{{\color{blue} $\fixM_3$}}}
\put(3.73,1.9){\minibox{{\color{blue} $\fixM_2$}}}
\put(.7,2.4){\minibox{$\beta$}}
\put(3.75,2.65){\minibox{$\beta$}}
\put(5.4,2.4){\minibox{$\beta$}}
\put(7.5,2.2){\minibox{$\beta$}}
\put(1.4,0){\minibox{(a)}}
\put(3.7,0){\minibox{(b)}}
\put(6.1,0){\minibox{(c)}}
\put(8.4,0){\minibox{(d)}}
\end{picture}
\caption{Different ways of filling the boundary torus. The replica $\mathbb{Z}_q$ symmetry acts as a rotation along the Euclidean time direction by $\beta$. Case (a) shows a slice of $\tilde{\bulk}_3$ after filling the boundary time circle with a disk; the replica fixed point set $\fixM_3$ (blue) is a codimension-2 surface in the centre of the circle. Cases (b), (c) and (d) show slices of $\tilde{\bulk}_q$ for $q=2,3,4$, respectively, after filling the boundary torus with a cross cap. For $q=2$ the fixed point set $\fixM_2$ is a codimension-1 orbifold plane wrapping the cross cap. For $q\geq 3$ there is no fixed point set under $\mathbb{Z}_q$ (i.e.\ $\fixM_q$ is empty) and the homology constraint is violated. However, we indicate in (d) that for even $q$ the cross cap itself is still a fixed point set under the subgroup $\mathbb{Z}_2$ of rotations by $\frac{q}{2}\beta$, resulting in an orbifold plane under the quotient.}
\label{fig:crosscap}
\end{figure}

There are now two cases, depending on whether $q$ is odd or even, both of which are illustrated in Fig.~\ref{fig:crosscap}. We begin with $q$ odd. In this case, there are no fixed points of the replica symmetry, and the quotient $\bulk_q$ is a smooth geometry with the same topology as $\tilde{\bulk}_q$. But $\regA$ has non-trivial homology in $\bulk_q$: there is no surface in the bulk whose only boundary is a spatial circle. In this way, the homology constraint is violated because the fixed point set of the replica symmetry (being empty here) is not homologous to the boundary region.

The second case occurs for even $q$.  While no surface is fixed by every element of the replica group, the $q/2$ replica symmetry now rotates half way round the time circle.  On the inner edge of the annulus where points are identified with their antipodes, this symmetry thus takes every point to itself.  In the quotient $\bulk_q$, the resulting singular set is the inside circle of the annulus times the spatial circle.  This set is not codimension-2, but instead codimension-1; it is a $\mathbb{Z}_2$ orbifold plane. Thus, for even $q$ there at least exists a fixed point set under a $\mathbb{Z}_2$ subgroup of $\mathbb{Z}_q$, despite the fixed points of the full replica group $\mathbb{Z}_q$ being empty for $q>2$.

There are several objections that one might raise to this example. The most obvious is that a metric is never constructed, and that there can never be any genuine saddle points with the given topology. The clearest refutation of this is to give an example sharing all the same qualitative features, but with a metric. One can in fact  give such a construction for pure three dimensional gravity. To explain the idea, consider the example given instead by starting from a geometry with two disjoint torus asymptotic boundaries, and then taking a $\mathbb{Z}_2$ quotient by swapping the two tori and simultaneously rotating half way round the time circle. This gives a cross-cap in the time circle as described here. Now generalize it to start with not tori, but higher genus Riemann surfaces with negative Euler characteristic.  There is an easy way to put a constant negative curvature metric on these geometries, as in the Maldacena-Maoz wormhole \cite{Maldacena:2004rf}. If the Riemann surface has a fixed-point free involutive isometry, the $\mathbb{Z}_2$ quotient of this combined with swapping the two boundaries gives the relevant example.

A second objection is that this fails to satisfy the homology constraint in a very particular way, by having no fixed points under the replica symmetry. This is impossible if the region $\regA$ has a nonempty boundary, since there must be a set of fixed points extending from $\entsurf$ into the bulk. We will show later in this section that it is possible to circumvent this objection and construct an example with $\extr$ nonempty, but not homologous to $\regA$.

Finally,  these geometries might never be dominant saddles in the path integral. While this is plausible, it is very difficult to see how their dominance can be ruled out in general. For this reason, we would like to make arguments that apply using only the topology.

\subsection{Implications for the homology constraint}

We believe this example has a real lesson to teach us: the homology constraint follows from the holographic replica trick only when the defect arises from a $\mathbb{Z}_q$ quotient \emph{at every positive integer $q$}. In the context of the LM argument, one may start with a bulk geometry $\bulk$ with boundary $\bdy$, and pick some extremal surface $\extr$ ending on $\entsurf$. Now introduce a small conical defect along $\extr$, and adjust the geometry so it remains on-shell away from the defect, which we now call $\fixM_q$. The angle of the conical singularity can be dialed to $\frac{2\pi}{q}$ for real positive values of $q$, to give a family of singular bulks $\bulk_q$, returning to $\bulk$ when $q\to1$. We assume that this can be done without changing the topology of the bulk, or the defect within it. When $q$ hits an integer, we can make a connection with the replica trick, but only if $\bulk_q$ can be lifted to a branched cover. That is, we require that there exists some $\tilde{\bulk}_q$, with boundary $\tilde{\bdy}_q$, and with a $\mathbb{Z}_q$ symmetry such that the quotient yields $\bulk_q$. Under this condition the action evaluated on $\tilde{\bulk}_q$ will legitimately give the $q^{\rm th}$ R\'enyi entropy (assuming $\tilde{\bulk}_q$ is the dominant saddle).

The examples tell us that this may happen at some $q$ despite $\extr$ violating the homology constraint. But taking $\bulk$ as the cross-cap geometry above, with $\extr$ chosen to be empty, this lift to $\tilde{\bulk}_q$ can be constructed only at odd $q$, and does not exist for even $q$. This is because in the lift we must choose which copy of the boundary to end up on after passing through the cross-cap, say by rotating through $k$ copies. But going through the cross-cap twice is homotopic to passing round the boundary circle, so must take us through one copy. This means we require $1= 2k \mod q$, which has a solution if and only if $q$ is odd, so no choice of $k$ gives the correct cover on the boundary. We argue below that demanding the stronger condition  that the lift must exist for all $q$ so that $\bulk_q$ can be used to compute $S^{(q)}_\regA$ at every positive integer, is equivalent to the homology constraint on $\extr$.

This example generalizes to allow nonempty $\extr$ as follows. Take the bulk $\bulk$ to be the same geometry as above, a solid torus with a neighborhood of a loop round the torus cut out and replaced with a M\"obius strip times a circle. Consider now the region $\regA$ as not a whole boundary circle, but an interval. Choose the surface $\extr$ to be a curve joining the endpoints of $\regA$, but passing round the non-contractible spatial circle on the side of the torus opposite of $\regA$, and hence not homologous to it. Nonetheless, again at odd values of $q$ this lifts to a smooth replica-symmetric bulk $\tilde{\bulk}_q$ in much the same way. When going round the surface $\extr$, move from sheet to sheet as usual, but when passing through the crosscap pass to the $(\frac{q+1}{2})^\text{th}$ sheet relative to where you begin. In essence, the crosscap acts to replace the closed geodesic that would otherwise be required, but without the associated fixed point set or conical defect.

\subsection{Further examples}

There are several aspects of the above examples that may appear to be required to get the sort of homology violation we see: for example, a non-orientable spacetime, a distinction only between odd and even $q$, or the specific way a topological feature replaces a part of the spacetime where a fixed point `should be'. We now briefly describe an example to show that none of these are necessary. Take $\bulk$ as a 3-dimensional ball, with boundary $\bdy=S^2$, cut out a ball at the center, and then identify antipodal points of the spherical edge of the resulting hole. This results in an $\mathbb{RP}^3$ with a boundary, an orientable spacetime. Take $\regA$ as an interval on the boundary sphere, running between the poles perhaps, and choose $\extr$ to pass through the nontrivial topology we have introduced in the centre. This can be lifted to a replica-symmetric cover\footnote{An easy way to see this is to notice that $\bulk-\extr$ deformation retracts to a circle, and going twice round this circle (or $p$ times in the generalization) is homotopic to a boundary loop passing through $\regA$ once.} to compute R\'enyi entropies for odd $q$ as above, but not for even $q$. A generalization identifies the points on the sphere bounding the cut out region as in the construction of a Lens space $L(p;p')$ from identifications of the boundary of a 3-ball (the special case of antipodal points being $p=2$). The result of this is that a lift will exist for $q$ co-prime to $p$ and not otherwise.

We conclude this section with a different sort of example, which shows that it is possible to obtain a codimension-2 defect with deficit angle different from $\frac{2\pi}{q}$ from the quotient construction. The simplest example generalizes the computation of thermal entropy from BTZ; again we will not explicitly mention the trivial dependence on the spatial direction and focus on a constant spatial slice. The usual geometry then picks $\extr$ as a point in the centre of this slice, and passing anticlockwise round this point takes you up one sheet in the cover. To generalize this, choose $\extr$ as a collection of several points, and for each point choose some integer number of sheets to change by when passing round them anticlockwise. To be consistent with the boundary covering space, the only requirement is that these integers sum to one. But if the integer chosen for a point, say $n$ is anything other than $\pm 1$, the quotient of the covers thus constructed will result in a variety of conical deficit angles, depending on the greatest common factor $r$ of $n$ and $q$. Specifically, the covering space will contain $r$ copies of the point, related by the ${\mathbb Z}_r$ subgroup of the ${\mathbb Z}_q$ replica symmetry, and the resulting defect angle after the quotient will be $\frac{2\pi r}{q}$.

For a specific example, take two points, choosing to increase by two sheets when passing round the first, and to decrease by one when passing round the second. The $q=4$ covering space has the topology of a torus with a single boundary (being the usual four glued copies of the boundary circle). This can be understood as a regular octagon, with top and bottom edges identified, as well as left and right edges; the remaining diagonal edges form the boundary circle. The replica symmetry acts by rotations by
$\frac{\pi}{2}$. This leaves the obvious fixed point at the centre, which becomes the usual $\frac{\pi}{2}$ deficit after the quotient. The replica symmetry rotating by $\pi$ has an additional two fixed points, being the centre of the top or bottom edge (identified with one another), and the centre of the left or right edge. Under the quotient, these two points become identified, and the result is a conical defect of angle $\pi$.

These sorts of examples can be thought of as many degenerate extremal surfaces lying on top of one another, which is very physically natural when viewing the defects as cosmic strings, and can be recovered as a limit of several of the usual defects coalescing. We would not expect these configurations to dominate in simple examples, particularly in the $q\to 1$ limit, but we can not rule out the possibility that these surfaces with multiplicity may be favourable enough to dominate in some more complicated geometry.

\section{A topological condition on $q$-R\'enyi saddles}
\label{sec:consistency}

In light of the example described in \S\ref{sec:crosscap}, we now wish to formulate a refined requirement regarding existence of covering spaces consistent with replica symmetry, which will be sufficiently strong to impose the expected topological constraint on the branching surface.

\subsection{Boundary conditions for branched covers}

The replica trick on the boundary involves defining the CFT on a branched cover $\tilde{\bdy}_q$.  To compute the semiclassical R\'enyi entropy, one should evaluate the gravitational action on every bulk manifold $\tilde{\bulk}_q$ satisfying the equations of motion and the boundary condition $\partial \tilde{\bulk}_q = \tilde{\bdy}_q$, and choose the dominant saddle. This is prohibitively difficult, so we ignore the question of which saddle dominates, and furthermore we restrict to $\tilde{\bulk}_q$ which can be realized as a branched cover over the original space $\bulk$. That is, we assume that $\bulk_q = \tilde{\bulk}_q / \mathbb{Z}_q$ is homeomorphic to the original manifold $\bulk$.
The boundary condition $\partial \tilde{\bulk}_q = \tilde{\bdy}_q$ imposes a topological condition on
$\cutout{\bulk_q}{\fixM_q}$, which we have seen above is \emph{not} the homology constraint.  First we need to understand this condition on $\cutout{\bulk_q}{\fixM_q}$ more precisely.

Let $\bdy$ be the boundary, and $\regA$ the region whose entanglement entropy we want to compute. Consider arbitrary closed loops in $\bdyc$, i.e., loops which may intersect $\regA$, but not $\partial \regA$. Then there is a homomorphism $\phi:\pi_1(\bdyc)\to \mathbb{Z}$ which counts the number of times the loop passes through $\regA$. This map is defined to take into account the orientation of the loop relative to $\regA$, so it computes the signed intersection number. Hence for any given $q$ there is a map $\phi_q:\pi_1(\bdyc)\to \mathbb{Z}_q$ which is just the previous map modulo $q$. The replicated boundary can be found by taking the cover $\bdycCover$ of $\bdyc$, defined so that loops in $\ker\phi_q$ lift to closed loops in the covering space. This is just saying that $\phi_q$ counts which sheet we are on in the replicated CFT, and loops which intersect $q$ times with $\regA$ are to be identified with closed
loops in the $q$-fold covering space in accordance with the replica trick.

Replica symmetry and the boundary condition $\partial \tilde{\bulk}_q = \tilde{\bdy}_q$ imply a similar construction for the bulk. First, it implies that there exists a map $\psi_q$ keeping track of movements between sheets of the cover in the bulk. Second, it requires that this restricts to $\phi_q$ on the boundary $\partial (\bulkcfix{q}) = \bdyc$ to give the correct boundary conditions.
This can be summarized as the existence of a bulk sheet-counting homomorphism $\psi_q$, factoring $\phi_q$ so that the following diagram commutes:
\begin{equation} \label{eq:ConsCondAA}
\begin{tikzcd}
\pi_1(\bdyc) \arrow{r}{\phi_q}\arrow{d}{i_*} & \mathbb{Z}_q \\
\pi_1(\bulkcfix{q}) \arrow{ur}{\psi_q}
\end{tikzcd}
\end{equation}
Here $i_*$ is the pushforward induced by the inclusion $i$ of the boundary into the bulk. Below, we show more formally that the existence of such a $\psi_q$ is equivalent to the existence of a replica-symmetric covering space of the bulk.

\subsection{Stronger criterion from the cosmic brane construction}

The condition \eqref{eq:ConsCondAA} is true of any branched cover satisfying the boundary conditions required by AdS/CFT.  The example in \S\ref{sec:crosscap} illustrates that with this condition alone, the singularity $\fixM_q$ is not necessarily homologous to $\regA$.

But now, let us we restrict to branched covers of the original manifold $\bulk$ constructed following LM: choose a codimension-2 surface $\extr$, and introduce a conical defect $\frac{2\pi}{q}$ at this surface.  Can this be lifted to a branched cover $\tilde{\bulk}_q$ obeying the correct boundary conditions? In general, this is possible only if the boundary condition \eqref{eq:ConsCondAA} holds for $\cutout{\bulk}{\extr}$. This criterion must be applied at each $q$ separately, as in general it may be possible to find $\psi_q$ for some values of $q$ but not for others, as in the crosscap example in \S\ref{sec:crosscap}. If $\psi_q$ exists lifting $\phi_q$ for all $q$, then there is a lift\footnote{ This is guaranteed if the relevant groups are finitely generated, which holds for compact manifolds with boundary.} $\psi:\pi_1(\cutout{\bulk}{\extr})\to \mathbb{Z}$ of $\phi$, meaning that the following diagram commutes:
\begin{equation}\label{eq:ConsCond}
\begin{tikzcd}
\pi_1(\bdyc) \arrow{r}{\phi}\arrow{d}{i_*} & \mathbb{Z} \\
\pi_1(\cutout{\bulk}{\extr}) \arrow{ur}{\psi}
\end{tikzcd}
\end{equation}
We will prove that \eqref{eq:ConsCond} is equivalent to the homology condition for the RT surface.

To illustrate this, let us return to the crosscap example of \S\ref{sec:crosscap} in this language. In this case, the starting point $\bulk$ is the M\"obius strip times a circle. We choose the defect $\extr$ to be the empty set, and try to construct branched covers at integer $q$. This can be done in a way satisfying the boundary condition \eqref{eq:ConsCondAA} for odd $q$, as illustrated in Fig.~\ref{fig:crosscap}, but for even $q$, there is no way to construct the branched cover. This can be seen very explicitly: we have $\phi_q(\Gamma)=1$, where $\Gamma$ is the loop going once around the boundary time circle. Assume there exists a lift $\psi_q$ and require $\psi_q(\Gamma)=1$. Now try to find a consistent way of choosing a value under $\psi_q$ for the bulk loop $\gamma$ that goes once through the cross-cap: since $\Gamma$ is homotopic to  $2\gamma$, we require $1 \,(\text{mod } q)= \psi_q(\Gamma) = \psi_q(2\gamma) = 2\,\psi_q(\gamma)$ which has a solution in $\mathbb{Z}_{q}$ if and only if $q$ is odd. At the level of the maps $\phi$ and $\psi$ of \eqref{eq:ConsCond}, this contradiction for even values of $q$ manifests itself as follows: both fundamental groups in \eqref{eq:ConsCond} are $\mathbb{Z}$ (ignoring the spatial circle), but one of them is generated by $\Gamma$, the other is generated by $\gamma$. This means $\phi$ is the identity (in particular $\phi(\Gamma)=1$) but $i_*$ is multiplication by 2, so clearly $\psi$ does not exist.

\subsection{Another look at the boundary condition}
\label{sec:detailsCovers}

Above we motivated \eqref{eq:ConsCondAA} from the AdS/CFT boundary conditions.  We now discuss more formally how to show that, for a codimension-2 defect, this condition is equivalent to the existence of a topologically replica-symmetric covering space of the bulk.

We take here the perspective that a branched cover is defined as a covering space in the usual sense, where any point in the base space has a neighborhood lifting to $q$ homeomorphic copies of itself, but after removing the branching surface.\footnote{ This works when the branching surface is codimension-2 as expected, since then there is a uniquely specified way to put back the branching surface in the cover. It fails when the branching surface is codimension-1, since extra information on what happens when passing through the surface must be imposed, but as this can only happen for even $q$ in any case, this will not alter our main conclusions. Similarly, there are examples with fixed point sets of codimension greater than 2, but the local topology of such a quotient depends on $q$, so that there is no obvious way to fit them into a family at all values of $q$.} Given this perspective, a replica-symmetric $q$-fold cover of $\bulk_q$, branched at $\fixM_q$, restricting correctly on the boundary, is defined by the following diagrams:
\begin{equation}\label{eq:ConsCondBB}
\begin{tikzcd}
\bdycCover \arrow{r}{p} \arrow{d}{\tilde{i}}
&\bdyc \arrow{d}{i}\\
 \bulkcCover{q} \arrow{r}{P} & \bulkcfix{q}
\end{tikzcd}\qquad
\begin{tikzcd}
\pi_1(\bdycCover) \arrow{r}{p_*} \arrow{d}{\tilde{i}_*}
&\pi_1(\bdyc) \arrow{r}{\phi_q} \arrow{d}{i_*}
&\mathbb{Z}_q \arrow{r}
&1\\
\pi_1(\bulkcCover{q}) \arrow{r}{P_*} &
\pi_1(\bulkcfix{q}) \arrow{r}{\psi_q}
&\mathbb{Z}_q \arrow{r}& 1
\end{tikzcd}
\end{equation}

The left hand diagram gives the covering space for boundary (top row) and bulk (bottom row), with respective covering maps $p$ and $P$. Then $i$ and $\tilde{i}$ are the injective inclusion maps of boundary into bulk, and the diagram commutes to implement the inclusions consistently. This implies immediately existence and commutativity of the left square of the diagram on the right. The additional maps are required to make the rows exact; the existence of $\phi_q$ and $\psi_q$ is then equivalent to replica symmetry. This is because we require the $q$-fold cover to possess the replica symmetry, acting via a $\mathbb{Z}_q$ deck transformation group, acting cyclically. This implies that the covering space is \emph{normal}, so that $P_*(\pi_1(\bulkcCover{q}))$ is a normal subgroup of $\pi_1(\bulkcfix{q})$, and taking the quotient by that subgroup gives the $\mathbb{Z}_q$ deck transformation group. This shows that the map $\psi$ exists, being the quotient map. The same argument also applies on the boundary. In particular $\ker(\phi_q) = \im(p_*)$ and $\ker(\psi_q) = \im(P_*)$. This shows that exactness is just a formal restatement of replica symmetry (in a topological sense) both at the boundary and in the bulk.

We prove in Appendix \ref{sec:theoremsproofs} that the holographically natural consistency condition \eqref{eq:ConsCondAA} holds if and only if $\bulkcfix{q}$ is such that the diagrams \eqref{eq:ConsCondBB} exist as described: existence of the map $\psi_q$ to factor $\phi_q$ as $\psi_q \circ i_*$ implies that a cover can be constructed, and conversely existence of a cover implies that there is such a $\psi_q$.

\section{Relation between topological consistency and homology constraint}
\label{sec:homology}

Having dealt with the topological consistency condition required for the LM construction, we now turn to the central thesis of this work: ``when is the homology condition satisfied?'' We will give an overview of our general strategy first and exemplify it with the BTZ spacetime. In \S\ref{sec:homology_derivation} we prove Theorem \ref{thm:homology} which posits that we are guaranteed the homology constraint provided the topological consistency condition is met (and vice-versa).

\subsection{General strategy}
\label{sec:HomologyOutline}
The extension of the replica trick into the bulk \cite{Lewkowycz:2013nqa} constructs a bulk geometry with boundary conditions given by the replicated field theory. At integer $q$ this leads to a geometry which is a smooth $q$-fold covering space $\tilde{\bulk}_q$ of the original bulk geometry. The quotient $\bulk_q = \tilde{\bulk}_q/\mathbb{Z}_q$ is then just the original bulk geometry with a conical defect inserted along the codimension-2 fixed point set of $\mathbb{Z}_q$.
 By construction $\bulk_q$ lifts to a $q$-fold branched cover of $\bulk$ at every integer value of $q$ and the topological consistency condition \eqref{eq:ConsCond} holds. This topological consistency condition is the essential feature of the LM construction on which we will focus for the remainder of this section. In particular, we will no longer concern ourselves with covering spaces and simply restrict attention to topological properties of $\bulk$ and $\extr$.

For concreteness, consider the following setup. We start with a bulk geometry $\bulk$, a boundary region $\regA$ and a candidate extremal surface $\extr$ to be used in computing entanglement entropy of $\regA$. The goal of the present section is to first illustrate and then prove the following statement: \emph{The topological consistency condition is satisfied for all $q$ if and only if $\regA$ is homologous (in the sense of \eqref{eq:radef}) to the surface $\extr$.}\footnote{ We postpone subtleties concerning orientation to \S\ref{sec:NonOrientable}.}
In order to decide whether $\bulk-\extr$ satisfies the topological consistency condition for all $q$ (and thus whether inserting a conical defect $\frac{2\pi}{q}$ along $\extr$ would lift to a $q$-fold branched cover in the sense of LM at all integer $q$), consider the following maps:
 \begin{itemize}
 \item A \emph{boundary sheet counting map} $\phi \in H^1(\bdy-\partial\regA)$: this is the map from \eqref{eq:ConsCond} which counts how many times a boundary loop passes through $\regA$.  Although $\phi$ was originally defined on the homotopy group $\pi_1(\bdy-\partial\regA)$, the fact that it maps to an abelian group ($\mathbb{Z}$) makes its action on the first homology group well-defined.  It is an element of the first cohomology group $H^1(\bdy-\partial \regA)$ of $\bdy-\partial\regA$ with integer coefficients; such elements are just homomorphisms from boundary loops into $\mathbb{Z}$.
 \item A \emph{local intersection map} $\uE \in H^2(\bulk,\bulk-\extr)$.    By $H^2(\bulk,\bulk-\extr)$ we mean the second cohomology group of $\bulk$ relative to $\bulk-\extr$ with integer coefficients. This $\uE$ is defined on 2-dimensional surfaces (2-chains) in a neighborhood of $\extr$ whose boundary is not a part of $\extr$. Given such a 2-surface $\disk$, the map $\uE$ counts the (signed) number of intersections of $\extr$ with $\disk$.\footnote{ Note that this is only true in the absence of torsion cycles. If the bulk spacetime has torsion (in the topological sense), then the second cohomology group is not isomorphic to intersection counting homomorphisms. To illustrate the essence of our argument, we refrain from considering these subtleties at the present. We will however account for torsion in our proof of Theorem \ref{thm:homology} in \S\ref{sec:homology_derivation}.} The sign of an intersection is given by the relative orientation of $\disk$ and $\extr$.

 \end{itemize}
 Intuitively, $\phi$ (being defined by $\regA$) encodes the boundary conditions for  bulk geometries that can be lifted to replica symmetric covers. On the other hand, $\uE$ (being defined by $\extr$) carries topological information about the way such a bulk covering space would be branched. The topological consistency condition relates these two objects: namely, we can translate the topological consistency condition into a certain consistency of $\phi$ with respect to $\uE$. To this end, take any 2-surface $\disk$ in the bulk, which is anchored outside of $\partial \regA$ on the boundary, i.e., $\partial \disk \subset \bdy-\partial\regA$. Now compare the following two properties of $\disk$:
 \begin{itemize}
 \item Use $\phi$ to count how many times the loop $\partial\disk$ passes through $\regA$ on the boundary (taking into account orientations).
 \item On the other hand, consider the parts of $\disk$ which lie in a tubular neighborhood of $\extr$ and compute their (signed) intersection number with $\extr$ using $\uE$.
 \end{itemize}
 Our main statements can be summarized as follows:
 \begin{enumerate}
 \item The topological consistency condition means that these two countings have to agree for any 2-surface $\disk$ anchored at the boundary.
 \item The two countings agree for all 2-surfaces $\disk$ anchored at the boundary if and only if $\regA$ and $\extr$ are homologous.
 \end{enumerate}
 Illustration and proof of these points will be our main task.
 The rationale behind the first point, is roughly the following; it is a reformulation of the consistency condition at the level of the unreplicated bulk geometry. In fact, the topological consistency condition as formulated in \eqref{eq:ConsCond} means exactly that $\phi$ can be lifted to a global sheet counting map $\psi \in H^1(\bulk-\extr)$ for bulk loops in a way that is consistent with $\uE$ on loops that bound 2-surfaces which intersect $\extr$. Roughly speaking, 2-surfaces intersecting $\extr$ encode the same information about local branching near the intersection point as do small loops going around the intersection point. We thus require that $\uE$ is inherited from the global sheet counting map $\psi$; technically we need $\uE = \delta\psi$ where $\delta$ is the coboundary map as defined in Appendix \ref{sec:algebraictopology}. This idea will be made precise in \S\ref{sec:homology_derivation}.

 In the rest of this section we proceed as follows. We first illustrate the above ideas using the example of the BTZ black hole. Here we refrain from rigor and just give qualitative arguments for the validity of the argument. \S\ref{sec:homology_derivation} then presents a rigorous and general proof that the topological consistency condition is satisfied if and only if $\regA$ is homologous to $\extr$ in an appropriate sense.

\subsection{Example: BTZ black hole}
\label{sec:BTZ}

Let us consider the single interval entanglement entropy of a CFT${}_2$ on a spatial circle and at finite temperature, in the deconfined phase above the Hawking-Page transition. The Euclidean bulk $\bulk$ is topologically a solid torus. For illustration, consider the various extremal surfaces sketched in Fig.~\ref{fig:btz}. In each case we want to study the relation between our topological consistency condition (\ref{eq:ConsCond}) and the homological properties of $\extr$ compared to $\regA$. The geodesics $\extr^{(a)}$ and $\extr^{(c)}$ are both homologous to $\regA$ (i.e., in each case there exists an interpolating surface whose only boundaries are $\regA$ and $\extr^{(a,c)}$) and should therefore be allowed saddles in the RT formula. As we will see, both of them satisfy the topological consistency condition for all $q$ and are thus consistent with the LM construction. The extremal surface $\extr^{(b)}$, on the other hand, is not homologous to $\regA$ which will manifest itself as a breakdown of the topological consistency condition (meaning that $\extr^{(b)}$ can never arise as the fixed point set of a LM-type argument).

\begin{figure}
\setlength{\unitlength}{0.1\columnwidth}
\centerline{\includegraphics[width=\textwidth]{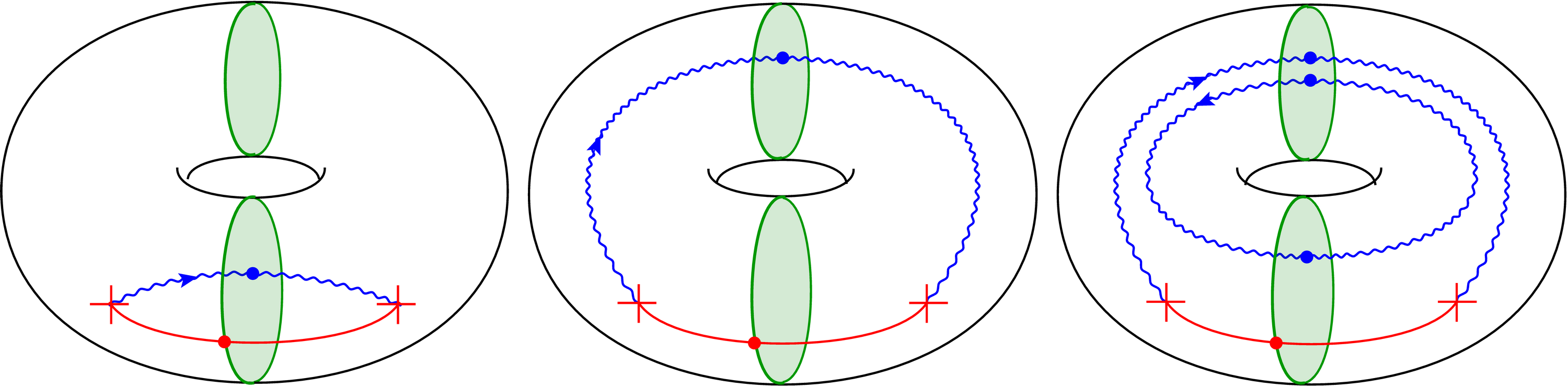}}
\begin{picture}(0.3,0.4)(0,0)
\put(1.9,.8){\minibox{{\color{red}${\regA}$}}}
\put(.9,1.3){\minibox{{\color{blue}${\extr^{(a)}}$}}}
\put(3.9,1.8){\minibox{{\color{blue}${\extr^{(b)}}$}}}
\put(7.65,2){\minibox{{\color{blue}${\extr^{(c)}}$}}}
\put(1.53,2.2){\minibox{{\color{darkgreen}$\disk$}}}
\put(1.53,.8){\minibox{{\color{darkgreen}$\disk$}}}
\put(1.45,.1){\minibox{(a)}}
\put(4.85,.1){\minibox{(b)}}
\put(8.2,.1){\minibox{(c)}}
\end{picture}
\caption{Three different candidates for bulk geodesics (blue) whose length may be considered to compute the entanglement entropy of the spatial boundary interval ${\regA}$ (red). The green surfaces $\disk$ illustrate the connection between boundary sheet counting (intersections with $\regA$) and bulk intersections. Thick dots illustrate intersections. The bulk surfaces $\extr^{(a)}$ and $\extr^{(c)}$ are homologous to $\regA$; which one of them is dominant depends on the length of $\regA$. The surface $\extr^{(b)}$ is not homologous to ${\regA}$ and is also forbidden by our topological consistency condition. In this latter case, boundary intersections do not match bulk intersection numbers.}
\label{fig:btz}
\end{figure}

By inspection, it is evident from the examples in Fig.\ \ref{fig:btz} how homology of $\regA$ and $\extr$ are tied to consistency of the boundary sheet counting map $\phi$ and the bulk intersection map $\uE$. For instance, in case (a) we have drawn two choices of surfaces $\disk$ with boundary on $\bdy-\partial\regA$ and clearly for both of them the number of intersections of $\partial\disk$ with $\regA$ matches the local intersection number of $\disk$ with $\extr^{(a)}$ in the bulk. The reader can easily verify that this matching indeed holds for any choice of 2-surface $\disk$ anchored on $\bdy-\partial\regA$. This consistency of $\phi$ and $\uE$ allows for a global sheet counting map $\psi \in H^1(\bulk-\extr)$ which reduces to $\phi$ at the boundary.

Similar reasoning holds for case (c): The only novelty here is the fact that orientations need to be taken into account to properly count intersections. For instance, the intersection number of the rear disk with $\extr^{(c)}$ is zero due to opposite orientations of $\extr^{(c)}$ relative to $\disk$ at the two intersection points. This matches the fact that the boundary of the rear disk does not intersect with $\regA$.

Now consider case (b) where $\extr^{(b)}$ indeed violates the homology constraint. As we can see, there now exist 2-surfaces $\disk$ on which $\phi$ (intersections of $\partial\disk$ with $\regA$) takes a different value than $\uE$ (intersection number of $\disk$ with $\extr^{(b)}$). Therefore the boundary sheet counting map $\phi$ does not lift to a global sheet counting map $\psi$ consistent with $\uE$ on loops which are boundaries of 2-surfaces that intersect $\extr^{(b)}$. Such a lift not being possible means that covers of the bulk branched along $\extr^{(b)}$ do not have the correct boundary conditions. This situation can therefore never arise from a LM construction.

At a pictorial level, this demonstrates our claims for the BTZ geometry: the homology constraint on $\regA$ and $\extr$ comes in conjunction with consistency of bulk intersection number with $\extr$ on the one hand, and boundary intersection number with $\regA$ on the other hand. This consistency is equivalent to the topological consistency condition \eqref{eq:ConsCond}. We will now turn to a rigorous proof of this idea.

\subsection{Topological consistency is equivalent to homology constraint}
\label{sec:homology_derivation}

We begin with a reminder of the notation. The bulk $\bulk$ is a $d+1$ dimensional orientable manifold with boundary $\bdy$, $\regA$ is a $d-1$ dimensional submanifold of $\bdy$ with boundary $\entsurf$, and $\extr$ is a $d-1$ dimensional submanifold of $\bulk$, also with boundary $\entsurf$. It should be borne in mind that either or both of $\regA$ and $\extr$ may be disconnected. For this proof, we will focus on the case when the bulk is orientable; we will return to the extension to the non-orientable case in the next subsection. A review of the algebraic topology required for this section can be found in Appendix \ref{sec:algebraictopology}.

To construct the cover on the boundary, we have a map from $\pi_1(\bdy-\entsurf)$ to $\mathbb{Z}$, counting signed intersections with $\regA$. This is equivalent to a map $H_1(\bdy-\entsurf)\to\mathbb{Z}$, since this homology group\footnote{ All homology and cohomology groups are taken with coefficients in $\mathbb{Z}$, so we will not explicitly indicate this dependence until we later generalize to the non-orientable case.} is the abelianization of the fundamental group by the Hurewicz theorem. In turn, by the universal coefficient theorem, this homomorphism is equivalent to a cohomology class $\phi\in H^1(\bdy-\entsurf)$.

This sheet counting map is equivalent to a homological description of the region $\regA$. Regard $\bdy-\entsurf$ as the boundary minus a tubular open neighbourhood of $\entsurf$, so it becomes a compact manifold with boundary. We may then use Poincar\'e-Lefschetz duality \eqref{eq:PoincareLefschetz} to send $\phi\in H^1(\bdy-\entsurf)$ to $\reghom\in H_{d-1}(\bdy-\entsurf,\partial(\bdy-\entsurf))\approx H_{d-1}(\bdy,\entsurf)$. This is a homology class represented by an appropriately chosen orientation of $\regA$.

The arguments above show that the bulk lifts to an appropriate branched cover at all $q$ whenever there exists some extension of $\phi$ into the bulk. So we require some $\psi\in H^1(\bulk-\extr)$ satisfying $i^*\psi=\phi$, where $i:\bdy-\entsurf\to\bulk-\extr$ is the inclusion of boundary into bulk, so that the restriction of $\psi$ to the boundary is $\phi$.

In addition to this, we need to know what happens when we traverse a small loop passing round the branching surface $\extr$. We make this precise by considering the normal bundle of $\extr$, which can be embedded in $\bulk$ as some tubular neighbourhood of $\extr$. The fibres $F$ of the bundle are copies of $\mathbb{R}^2$, with $\extr$ lying at the origin, so the action of $\psi$ passing round the surface is described by its restriction to $F-\{0\}$ on each fibre, $\psi_F \in H^1(F-\{0\})\approx \mathbb{Z}$. We would normally like this restriction to each fibre to be a generator of $H^1(F-\{0\})$, which implies that after taking the $q$-fold cover and the $\mathbb{Z}_q$ quotient there will be a $\frac{2\pi}{q}$ conical deect at every $q$. This need not hold in general, as examplified in \S\ref{sec:crosscap}, and the conical defect could be of a different angle depending on $q$ in a more complicated way. This can be accounted for in the end by counting such a surface with the appropriate multiplicity in the homology computation, and by allowing this multiplicity we may assume that $\psi_F$ is always a generator.

So we would like necessary and sufficient conditions for the existence of some $\psi\in H^1(\bulk-\extr)$, such that
\begin{enumerate}
\item The restriction $i^*\psi$ to the boundary is $\phi\in H^1(\bdy-\entsurf)$, and
\item The restriction $\psi_F$ to the embedding into $\bulk$ of the fibres $F$ of the normal bundle $N$ of $\extr$ is a generator of $H^1(F-\{0\})$.
\end{enumerate}
The choice of $\psi_F$ will in the course of the proof supply us with an appropriate orientation of $\extr$, and thus an element $\extrhom\in H_{d-1}(\extr,\entsurf)$. We now have enough to state our main result.

\begin{theorem}
\label{thm:homology}
There exists some $\psi\in H^1(\bulk-\extr)$ which restricts to $\phi$ on the boundary, and which restricts to $\psi_F$ on the fibres of $N$, if and only if $\regA$ is homologous to $\extr$, in the sense that the inclusions of $\extrhom$ and $\reghom$ into $H_{d-1}(\bulk,\entsurf)$ are equal.
\end{theorem}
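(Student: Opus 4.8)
The plan is to convert the cohomological existence question for $\psi$ into a homological one by Poincar\'e--Lefschetz duality, so that the interpolating surface $\intM$ appears literally as the dual of $\psi$. First I would replace $\bulk-\extr$ by the compact orientable $(d+1)$-manifold $W=\bulk\setminus\nu(\extr)$ obtained by deleting an open tubular neighbourhood $\nu(\extr)$; since $\extr$ has codimension $2$ with oriented normal bundle and meets $\bdy$ only along $\entsurf$, its boundary decomposes as $\partial W = B\cup S$, where $B=\bdyc$ is the asymptotic piece and $S=\partial\nu(\extr)$ is the normal circle bundle over $\extr$, the two being glued along a circle bundle over $\entsurf$. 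The two hypotheses on $\psi$ then become prescriptions of its restriction to the two pieces of $\partial W$: $i^*\psi=\phi$ on $B$, and $\psi|_S$ a fibrewise generator on $S$. The latter is equivalent, via the Thom isomorphism $H^2(\bulk,\bulk-\extr)\cong H^0(\extr)$, to the condition $\delta\psi=\uE$ on the local intersection class, which is the relation $\uE=\delta\psi$ anticipated in \S\ref{sec:HomologyOutline}.

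Next I would apply Poincar\'e--Lefschetz duality on $W$ in its relative form $H^1(W)\cong H_d(W,\partial W)$ (together with the variant $H^1(W,B)\cong H_d(W,S)$), sending $\psi$ to a relative $d$-cycle $\intM=D\psi$. Under this correspondence the two boundary conditions translate into conditions on $\partial\intM\in H_{d-1}(\partial W)$: the condition on $B$ says that the asymptotic part of $\partial\intM$ is the Poincar\'e--Lefschetz dual $\reghom$ of $\phi$, while the fibrewise-generator condition on $S$ says that the normal-bundle part of $\partial\intM$ wraps each fibre once and hence projects under $S\to\extr$ to the orientation class $\extrhom$. Geometrically $\intM$ is exactly an interpolating surface with $\partial\intM=\regA\cup\extr$; this is the conceptual heart of the argument, and it already gives the forward direction, since the existence of $\psi$ produces such an $\intM$ and therefore forces $\reghom=\extrhom$ in $H_{d-1}(\bulk,\entsurf)$.

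To make the equivalence rigorous in both directions I would set up a commutative ladder comparing the cohomology long exact sequence of the pair $(\bulk,\bulk-\extr)$ with a homology long exact sequence obtained by dualising it termwise. Here PL duality supplies the vertical isomorphisms: $H^1(\bulk)\cong H_d(\bulk,\bdy)$, $H^1(\bulk-\extr)\cong H_d(W,\partial W)$, and, combining the Thom isomorphism with PL duality on $\extr$, $H^2(\bulk,\bulk-\extr)\cong H^0(\extr)\cong H_{d-1}(\extr,\entsurf)$, so that $\uE$ is carried to $\extrhom$. Filling the tubular neighbourhood back in (by excision and a Mayer--Vietoris argument for $\bulk=W\cup\nu(\extr)$) identifies the re-glued boundary data of $\intM$ with the difference $\reghom-\extrhom$ viewed in $H_{d-1}(\bulk,\entsurf)$. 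A relative $(d-1)$-cycle is a boundary precisely when it vanishes in homology, so $\intM$ exists with the prescribed boundary, equivalently $\psi$ exists, if and only if $\reghom=\extrhom$ there, which is the stated homology constraint. Throughout I would invoke the universal coefficient theorem (as in the setup of $\phi$) to pass between the maps $\phi,\psi$ and genuine cohomology classes.

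The step I expect to be the main obstacle is the orientation-consistent bookkeeping where $\regA$, $\extr$ and the normal circle bundle all meet along $\entsurf$: the corner structure of $\partial W$ and the gluing of $B$ to $S$ must be handled so that the dual of $\psi_F$ reproduces exactly the orientation class $\extrhom$, and with the correct sign relative to $\reghom$. Closely related is the torsion subtlety already flagged in \S\ref{sec:HomologyOutline}: because the universal coefficient theorem carries an $\mathrm{Ext}$ term, cohomology is not literally the linear dual of homology and the naive ``intersection-number'' reading of $\uE$ can fail. The clean resolution is to phrase everything through PL duality and to use the naturality of the connecting homomorphisms rather than intersection pairings; verifying that the duality isomorphisms genuinely commute with the boundary maps in the two long exact sequences is the technical crux on which the whole equivalence rests.
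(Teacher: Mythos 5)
Your proposal is at heart a dualized reformulation of the paper's argument (same ingredients: Thom class, excision, Poincar\'e--Lefschetz duality), but the exact sequence you hang it on cannot do the job, and this is a genuine gap rather than a presentational one. Your ladder is built on the long exact sequence of the pair $(\bulk,\bulk-\extr)$,
\[
H^1(\bulk)\to H^1(\bulk-\extr)\xrightarrow{\ \delta\ } H^2(\bulk,\bulk-\extr)\xrightarrow{\ j^*\ } H^2(\bulk),
\]
whose terms are all absolute with respect to the boundary: neither $\phi$ nor its dual $\reghom$ appears in any slot of this sequence. Exactness therefore only characterizes when $\uE$ lifts to \emph{some} $\psi$ with $\delta\psi=\uE$, namely iff $j^*\uE=0$ in $H^2(\bulk)$; dually, iff $i_*\extrhom=0$ in $H_{d-1}(\bulk,\bdy)$, i.e.\ iff $\extr$ is homologous rel $\entsurf$ to \emph{some} cycle lying in the boundary. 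That is strictly weaker than the theorem: it makes no reference to $\regA$ at all, so it cannot produce the biconditional with $i_*\reghom=i_*\extrhom$ (for instance, it would give the same verdict whether the boundary data is $\phi$ for $\regA$ or for its complement, which the theorem distinguishes --- in BTZ, a geodesic homologous to $\regA$ is not homologous to $\regA^c$). The crux of the paper's proof is precisely the device that encodes the two restriction conditions \emph{simultaneously}: the commuting long exact sequences of the pair $(\bulk-\extr,\bdy-\entsurf)$ and of the triple $(\bulk,\bulk-\extr,\bdy-\entsurf)$ splice into a single exact sequence
\[
H^1(\bulk-\extr)\xrightarrow{\ (i^*,\delta)\ } H^1(\bdy-\entsurf)\oplus H^2(\bulk,\bulk-\extr)\xrightarrow{\ \delta-j^*\ } H^2(\bulk,\bdy-\entsurf),
\]
so that $\psi$ exists with $i^*\psi=\phi$ and $\delta\psi=\uE$ iff $\delta\phi=j^*\uE$, and only then is duality applied, once, to turn this equation into the homology statement in $H_{d-1}(\bulk,\entsurf)$. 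Your $W=\bulk\setminus\nu(\extr)$ picture with $\partial W=B\cup S$ can surely be repaired, but only by running the same relative-to-boundary bookkeeping; the Mayer--Vietoris splitting of $H_{d-1}(\partial W)$ along the corner (the circle bundle over $\entsurf$) that you defer as ``the main obstacle'' is not a finishing touch --- it is where the theorem actually lives, and without it even your forward direction does not pin down the $B$-part of $\partial\intM$ as $\reghom$.

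A second, smaller gap: you assert that the $S$-part of $\partial\intM$ ``wraps each fibre once and hence projects to $\extrhom$,'' i.e.\ that the dual of $j^*\uE$ is $i_*\extrhom$ with multiplicity exactly one. This is a generalization of Samelson's theorem on the Thom class of a submanifold, and the paper proves it rather than assumes it: restrict the cap product $\mu\frown j^*\uE$ to a tubular neighborhood retracting onto $\extr$, deduce that the result is $n\, i_*\extrhom$ on each connected component, and fix $n=1$ by a local computation in $\mathbb{R}^{d+1}$. Some such argument (or citation) is needed for your identification of boundary data, on top of the orientation bookkeeping you already flagged.
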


\begin{proof}

The choice of generators of $H^1(F-\{0\})$ on each fibre is equivalent to an orientation of each fibre, which can be regarded equivalently as a smooth choice of generators of the relative cohomology groups $H^2(F,F-\{0\})$, via a coboundary map. In each connected component of $\extr$, making a choice on one fibre leaves no freedom on the others, which are determined uniquely from continuity. From this we can get a well-known object in the topology of vector bundles, a Thom class $u\in H^2(N,N-\extr)$, where $\extr$ here sits at the zero section. This follows since if $\extr$ is connected, then the restrictions $H^2(N,N-\extr)\to H^2(F,F-\{0\})$ are in fact all isomorphisms. This is clear when restricted to a local trivialization of the bundle, and can be extended to the whole of the connected component by a standard procedure of gluing trivializations together one by one using a Mayer-Vietoris sequence. So our choices of $\psi_F$ can be summarized by a choice of generator $u\in H^2(N,N-\extr)$ for each connected component of $\extr$. See section 4.D of \cite{Hatcher:2002} for a discussion of the Thom class.

Now regarding $N$ once again as a tubular neighbourhood\footnote{ The existence of such a neighborhood is in fact a technical assumption, though this can be relaxed \cite{samelson1965}.} of $\extr$ in $\bulk$, we can use an excision theorem to give an isomorphism $H^2(N,N-\extr)\approx H^2(\bulk,\bulk-\extr)$ (excising $\bulk-\extr$), so we may regard $u$ instead as an element of $H^2(\bulk,\bulk-\extr)$. Roughly speaking, we may think of $u$ as a map from two-dimensional chains with boundaries away from $\extr$, which counts, with signs, the number of intersections with $\extr$.

Now if $\bulk$ is oriented, the orientation of the normal bundle of $\extr$ induces also an orientation of the tangent bundle of $\extr$. This can be characterized by a relative homology class $\extrhom\in H_{d-1}(\extr,\entsurf)$.

Consider now the two cohomological long exact sequences, of the pair $(\bulk-\extr,\bdy-\entsurf)$, and of the triple $(\bulk,\bulk-\extr,\bdy-\entsurf)$, as explained in \eqref{eq:cohomologyTriple}, which fit into a diagram:
\begin{equation*}
\kern-2em
\begin{tikzcd}[row sep=small,column sep=small]
\arrow{dr} & & H^1(\bulk-\extr) \arrow{r}{i^*} \arrow{dd}{\delta} & H^1(\bdy-\entsurf) \arrow{dr}{\delta} \arrow{dd}{\delta} & & \hbox{} \\
& H^1(\bulk-\extr,\bdy-\entsurf) \arrow{ur}{j^*} \arrow{dr}{\delta} & & & H^2(\bulk-\extr,\bdy-\entsurf) \arrow{ur} \arrow{dr} & \\
\arrow{ur} & & H^2(\bulk,\bulk-\extr) \arrow{r}{j^*} & H^2(\bulk,\bdy-\entsurf) \arrow{ur}{i^*} & & \hbox{}
\end{tikzcd}\
\end{equation*}
where the $i^*$ and $j^*$ maps are the relevant restrictions and extensions (dual to inclusion and quotient) respectively, and all the $\delta$s are various relative coboundary maps. The top and bottom rows are the usual exact sequences of relative cohomology, and it is straightforward to show that the diagram commutes at the level of cochains. It is then an easy exercise\footnote{ Exercise 38 in chapter 2.2 of \cite{Hatcher:2002}.} to show that this induces another long exact sequence, the crucial part of which is
\begin{equation*}
\begin{tikzcd}
H^1(\bulk-\extr) \arrow{r}{{(i^*,\delta)}} & H^1(\bdy-\entsurf)\oplus H^2(\bulk,\bulk-\extr) \arrow{r}{\delta - j^*}& H^2(\bulk,\bdy-\entsurf)\end{tikzcd}
\end{equation*}
where the maps are the obvious ones from the diagram above, except that one factor in the second map has a minus sign.

Now an element $\psi\in H^1(\bulk-\entsurf)$ restricting to $\phi$ on the boundary means that
$i^*\psi=\phi$, and restricting to $\psi_F$ on the fibres is equivalent to $\delta \psi=u$. So $\psi$ with the desired properties exists iff $\phi\oplus u$ is in the image of the first map, which by exactness is the kernel of the second, so this is equivalent to $\delta \phi = j^* u$.

Finally, there is a generalization of Poincar\'e-Lefschetz duality \eqref{eq:PoincareLefschetz} that by splitting $\bdy$ into a tubular neighborhood of $\entsurf$ and its complement gives us an isomorphism $H^2(\bulk,\bdy-\entsurf)\approx H_{d-1}(\bulk, \entsurf)$, via cap product with the fundamental class $\mu\in H_{d+1}(\bulk,\bdy)$. The theorem follows by dualizing the equality $\delta \phi = j^* u$ under this, and showing that the duals of $\delta \phi$ and $j^* u$ are repectively the inclusions of $\reghom$ and $\extrhom$ into $H_{d-1}(\bulk,\bdy-\entsurf)$.

The dual of $\delta \phi$ is $\mu\frown\delta\phi$, which equals the inclusion of $\partial\mu\frown\phi$ into the bulk. This is indeed the correct thing, since $\partial\mu\frown\phi$ is just the boundary Poincar\'e-Lefschetz duality we originally used to relate $\reghom$ and $\phi$.

The final part, computing the dual of $j^* u$, generalizes a well-known result in the case of closed manifolds \cite{samelson1965}. The dual of $j^* u$ is $\mu\frown j^* u=j_*\mu\frown u$, where $j_*\mu\in H_{d+1}(\bulk,(\bulk-\entsurf)\cup\bdy)$. By restricting this cap product to a tubular neighborhood $N$ of $\entsurf$, with inclusion map $k:N\to\bulk$, it is clear that the result must be in the image of $k_*$. Choosing $N$ such that it retracts onto $\extr$ by a retraction $r:N\to \extr$ homotopic to the identity, it further follows that $k_*=i_* r_*$, so we get a cycle in the image of $i_*$. Working separately in each connected component of $\extr$, this must be $n \, i_*\extrhom$ for some integer $n$. A local analysis in a trivialization of $N$, where the computations are simply in $\mathbb{R}^{d+1}$, shows that $n=1$, so the dual of $j^* u$ must in fact equal $i_*\extrhom$ as required.

\end{proof}

\subsection{Non-orientable manifolds}
\label{sec:NonOrientable}

In the case when the bulk is not orientable, the cohomological parts of the above proof go through unchanged, but the dualities used to make statements in terms of homology are not applicable. In fact, since $\extr$, for instance, need not be orientable, there may not even be a homology class in $H_{d-1}(\bulk,\entsurf;\mathbb{Z})$ that represents it, so it becomes less clear how to state the result. This difficulty can be overcome by passing to $\mathbb{Z}_2$ coefficients, but since the correct cohomological argument requires integer coefficients, this gives an insufficiently strong constraint on allowed $\extr$. The correct thing is instead to use homology with local (or twisted) coefficients (see \cite{Hatcher:2002}, section 3.H).

The generalization to local coefficients is performed by, roughly speaking, allowing the coefficients of the chain complexes to live in some module of the fundamental group of the spacetime. In our case, this will be the orientation class, being the module with action given by $\pm 1$ depending on whether traversing a loop preserves or reverses orientation.

There is a generalized Poincar\'e-Lefschetz duality that holds for non-orientable spacetimes, between integer cohomology, and homology with coefficients in this module twisted by the orientation class. Since the cohomology we use is ambivalent to the presence or absence of orientation, nothing is altered until the very end, when we take this duality, and as such the homology statement will be in terms of local coefficients.

\section{Discussion}
\label{sec:discussion}

The generalized gravitational entropy construction of LM, which is inspired by the replica trick in field theory,
provides a derivation of the RT prescription for holographic  entanglement entropy. We examined the conditions under which the LM construction guarantees, at the level of topology, that the extremal surface computing entanglement entropy satisfies the homology constraint in the bulk.

Suppose we have a bulk Euclidean geometry $\bulk$, a boundary region $\regA$ and a minimal surface $\extr$, a candidate for computing the entanglement entropy of $\regA$. The local, dynamical part of the LM argument relates this to the replica trick as follows. Introduce a conical deficit at $\extr$, with defect angle $\frac{2\pi}{q}$, increasing $q$ away from unity, whilst changing the geometry to keep it on shell away from $\extr$. Locally, near $\extr$, we may choose a `cigar' where there are polar coordinates parametrising a distance from $\extr$ and an angle around it, with $\extr$ itself at the origin. When $q$ is an integer, this cigar with a conical deficit can be unwrapped by allowing the period of the angle to become $q$ times larger, so we obtain a smooth space. The resulting boundary is \emph{locally} the correct space with which to compute the $q^{\rm th}$ R\'enyi entropy, which makes contact with the replica trick.

It is less obvious that this picture works globally, since there is usually not a globally valid `cigar', with a circle parameterised by a global `Euclidean time' shrinking to zero size at $\extr$. There is no obvious obstruction to introducing a conical defect at any minimal surface to give a continuous family of geometries, but at integer $q$, when the defects are locally unwrapped, it may not be possible to consistently extend the local pictures to the whole spacetime, or doing say may produce the R\'enyi entropy for the wrong region. For example, on a torus, the replicated geometries for an interval and its complement are locally the same, but differ globally by what happens on traversing the nontrivial cycles of the torus. This global picture, left implicit in LM, is crucial for understanding which choices for $\extr$ give rise to the correct R\'enyi entropies.

Our main result addresses this global aspect, and can be phrased as follows: \emph{A conical deficit introduced at a given extremal surface $\extr$ comes from a quotient of a geometry whose boundary is the replicated space relevant for the $q$th R\'enyi entropy of region $\regA$, for every positive integer $q$, if and only if $\extr$ is homologous to $\regA$}.  In particular, using the terminology explained in the introduction, the local interpretation of the LM assumptions suffices to enforce the homology constraint so long as it holds at all $q$ and is consistent with the replica trick at integer $q$.

In this statement it is crucial that the construction of the bulk covering space works at all integer values of $q$. We gave an explicit example to show that branching along homology violating extremal surfaces can
occur when the replica symmetric covering space exists for some value of $q$, but not for others. For the   cosmic brane construction of LM \cite{Lewkowycz:2013nqa} which starts from a singular spacetime $\bulk_q$,  the homology constraint is implemented in the $q\rightarrow 1$ limit, iff for all integer $q$, $\bulk_q$ lifts to a non-singular $q$-fold replica symmetric branched cover (asymptoting to the replicated boundary geometry).  We emphasize that -- since we have worked only at the level of topology -- these $\bulk_q$ need not be actual solutions to the theory but merely smooth manifolds satisfying appropriate boundary conditions at the cosmic brane and at infinity. The dynamical part of the argument, which requires the geometries to be on-shell (at least in a neighbourhood of $q=1$), is essentially independent of our considerations.

As such our analysis requires that the entanglement entropy is computed by an extremal surface which arises within a family of cosmic brane configurations as a limiting case. We do not require that the R\'enyi entropies themselves for integer values of $q$ are computed by the elements of the same family. They could be computed (at large central charge) by  other families, which would allow for R\'enyi phase transitions as in \cite{Belin:2013dva}, or even isolated configurations. This also implies that the logarithmic negativity, which is obtained for bi-partitioning of a pure state by analytic continuation of the even-$q$ R\'enyi entropies \cite{Calabrese:2012nk}, could arise from a completely different family, for we only need a family of cosmic brane solutions for even integral $q$.

We emphasize once more that the arguments were purely topological, with no recourse to dynamical information of the bulk gravitational  theory. In particular, they hold for any choice of the boundary region $\regA$ including disconnected ones. The statements regarding topological consistency, covering spaces, and the proof of
Theorem \ref{thm:homology} are applicable to such situations. Of course, being agnostic of the dynamics has drawbacks, in that we will be unable to decide which of the homology respecting extremal surfaces actually computes the entanglement entropy. On the other hand, this gives us the advantage of being able to allow breaking replica symmetry away from integral values of $q$, as perhaps may be necessary for non Einstein-Hilbert gravitational dynamics \cite{Camps:2014voa}.

We have assumed  the bulk spacetime to be orientable in our discussion. For non-orientable spacetimes
as indicated in \S\ref{sec:NonOrientable}  the natural generalization involves  homology groups
with twisted coefficients as opposed to integral homology used in our proof. This is relevant, for example, in the case of the $\mathbb{RP}^2$-geon spacetime \cite{Louko:1998hc}, which is a pure state of a two-dimensional CFT (on a Klein bottle). In the bulk the spacetime has a horizon but a single asymptotic boundary
(hence single-exterior black hole).  In this case the homology condition is necessary to argue that the entanglement entropy for a region and its complement (on a spatial circle of the boundary) are the same as required by purity; see \cite{Maxfield:2014kra} for an analysis of the entanglement structure in the
$\mathbb{RP}^2$-geon (on the $t=0$ slice the ${\mathbb Z}_2$-valued homology group will suffice).
It would be interesting to flesh out the details of the twisted homology constraint for more general physical examples.

Finally, let us also note that our analysis takes seriously the Euclidean LM construction. We only require the boundary region $\regA$ to lie on the time-reflection symmetric surface. We are not a-priori guaranteed then that the extremal surface $\extr$, and consequentially the homology surface $\homsurfA$, lie at the moment of time-reflection symmetry in the bulk (see \cite{Maxfield:2014kra} for an example in thermal AdS where a sub-dominant extremal surface of this kind was described). If such surfaces give the dominant contribution to the holographic entanglement entropy, then various results in the holographic entanglement entropy literature  such as strong-subadditivity \cite{Headrick:2007km,Wall:2012uf} and causality \cite{Wall:2012uf,Headrick:2014cta}  would have to be revisited.

\acknowledgments
We would like to thank  Matthew Headrick, Veronika Hubeny and Aitor Lewkowycz for illuminating discussions.
DM and MR would like to acknowledge the hospitality of the Issac Newton Institute, Cambridge where this project was initiated.
TH, DM and MR would also like to thank the Aspen Center for Physics (supported by the National Science Foundation under Grant 1066293) for their hospitality.

FH is supported by a Durham Doctoral Fellowship. DM was supported in part by the U.S. National Science Foundation under Grant No PHY11-25915, and by funds from the University of California. HM is supported by a STFC studentship. MR was supported in part by the FQXi  grant ``Measures of Holographic Information" (FQXi-RFP3-1334), the STFC Consolidated Grant ST/L000407/1, and by the ERC Consolidator Grant  ERC-2013-CoG-615443: SPiN (Symmetry Principles in Nature).

\appendix

\section{Equivalence of topological consistency conditions}
\label{sec:theoremsproofs}

We here prove two theorems which will make precise the equivalence between the existence of a replica symmetric covering space as in \eqref{eq:ConsCondBB}, and the existence of a sheet counting map in the bulk factoring the boundary sheet counting map by the inclusion \eqref{eq:ConsCondAA}.

To make the structure of the arguments clearer, we will prove something more general. We will take $X$ to be a subspace of $Y$, and look for covering spaces $\tilde{X}$ and $\tilde{Y}$, with an inclusion map consistent with the inclusion of the base spaces. We additionally require the covers to possess certain symmetries, specifically that they are normal with deck transformation groups $G$ and $H$ respectively. The existence of such a cover will be shown to be equivalent to the existence of a commutative square of homomorphisms between $\pi_1(X)$, $\pi_1(Y)$, $G$ and $H$. See section 1.3 of \cite{Hatcher:2002} for the necessary mathematical material.

To apply this in the context we require it, identify $X$ and $Y$ with boundary and bulk geometries respectively, both with the branching surfaces removed. Then $G$ and $H$ are both taken as $\mathbb{Z}_q$, so that we are requiring the covers to be replica symmetric. The square of homomorphisms will then be equivalent to \eqref{eq:ConsCondAA}.

The first part of the proof, Theorem \ref{thm:LiftImpliesLM}, constructs the covering spaces given the appropriate homomorphisms. Theorem \ref{thm:LMimpliesLift} shows the converse, that the appropriate homomorphisms exist starting from the covering spaces.

\begin{theorem}\label{thm:LiftImpliesLM}
 Let $X$ and $Y$ be path-connected, locally path-connected, semi-locally simply-connected spaces, and $i:X\to Y$ an injective inclusion map. Let $\phi:\pi_1(X)\to G$ and $\psi:\pi_1(Y)\to H$ be surjective homomorphisms, and $\rho:G\to H$ an injective homomorphism such that
\begin{equation*}
\begin{tikzcd}
\pi_1(X) \arrow{r}{\phi} \arrow{d}{i_*}
&G \arrow{d}{\rho}\\
\pi_1(Y) \arrow{r}{\psi} &H
\end{tikzcd}
\end{equation*}
commutes. Then there are covering spaces $p:\tilde X\to X$, and $P:\tilde Y \to Y$, with deck transformation groups $G,H$ and an injective inclusion map $\tilde i:\tilde X \to \tilde Y$ such that
\begin{equation*}
\begin{tikzcd}
\tilde X \arrow{r}{p} \arrow{d}{\tilde{i}}
&X \arrow{d}{i}\\
\tilde Y \arrow{r}{P} &Y
\end{tikzcd}
\end{equation*}
commutes.
\end{theorem}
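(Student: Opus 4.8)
The plan is to build the two covering spaces directly from the kernels of $\phi$ and $\psi$, obtain $\tilde i$ as a lift through the covering $P$ using the standard lifting criterion, and then extract injectivity from an equivariance property that is forced by the commuting square. First I would set $N=\ker\phi\trianglelefteq\pi_1(X)$ and $M=\ker\psi\trianglelefteq\pi_1(Y)$. Since $X$ and $Y$ are path-connected, locally path-connected and semi-locally simply-connected, the classification of covering spaces (section 1.3 of \cite{Hatcher:2002}) supplies connected covers $p:\tilde X\to X$ and $P:\tilde Y\to Y$ with $p_*\pi_1(\tilde X)=N$ and $P_*\pi_1(\tilde Y)=M$. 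Because $N$ and $M$ are normal, these covers are normal, and the surjections $\phi,\psi$ identify their deck transformation groups with $\pi_1(X)/N\cong G$ and $\pi_1(Y)/M\cong H$ respectively.

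Next I would produce $\tilde i$. The composite $i\circ p:\tilde X\to Y$ lifts through $P$ to a map $\tilde i:\tilde X\to\tilde Y$ precisely when $(i\circ p)_*\pi_1(\tilde X)\subseteq P_*\pi_1(\tilde Y)=M$, i.e.\ when $i_*N\subseteq M$. This inclusion is exactly what the commuting square supplies: for $[\gamma]\in N$ one has $\psi(i_*[\gamma])=\rho(\phi[\gamma])=\rho(e_G)=e_H$, so $i_*[\gamma]\in\ker\psi=M$. The resulting lift satisfies $P\circ\tilde i=i\circ p$, which is the required commutativity, and fixing basepoints $\tilde x_0\mapsto\tilde y_0$ pins it down uniquely.

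The substantive step, and the one I expect to be the main obstacle, is injectivity of $\tilde i$; everything above is routine covering-space theory. I would first prove the equivariance relation $\tilde i(g\cdot\tilde x)=\rho(g)\cdot\tilde i(\tilde x)$ for all $g\in G$. Both $\tilde i\circ g$ and $\rho(g)\circ\tilde i$ are lifts of $i\circ p$ through $P$ (using $p\circ g=p$ and $P\circ\rho(g)=P$), so by uniqueness of lifts it suffices to check that they agree at $\tilde x_0$. Writing $g=\phi[\gamma]$, the point $g\cdot\tilde x_0$ is the endpoint of the $p$-lift of $\gamma$ from $\tilde x_0$, so $\tilde i(g\cdot\tilde x_0)$ is the endpoint of the $P$-lift of $i\circ\gamma$ from $\tilde y_0$; meanwhile $\rho(g)\cdot\tilde y_0$ is the endpoint of the $P$-lift of any loop representing $\rho(g)=\psi(i_*[\gamma])$, and by commutativity $i\circ\gamma$ is such a loop. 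Hence the two lifts coincide.

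Given equivariance, injectivity follows cleanly. Suppose $\tilde i(\tilde x_1)=\tilde i(\tilde x_2)$; applying $P$ and using injectivity of $i$ gives $p(\tilde x_1)=p(\tilde x_2)$, so $\tilde x_2=g\cdot\tilde x_1$ for a unique $g\in G$, and equivariance forces $\rho(g)\cdot\tilde i(\tilde x_1)=\tilde i(\tilde x_1)$. Since deck transformations of a covering act freely, $\rho(g)=e_H$, and injectivity of $\rho$ then yields $g=e_G$, whence $\tilde x_1=\tilde x_2$. The one piece of care I would flag throughout is the basepoint bookkeeping tying the algebraic identifications $G\cong\pi_1(X)/N$ and $H\cong\pi_1(Y)/M$ to the geometric deck actions, so that the $\rho$ of the hypothesis is literally the homomorphism intertwining the two deck-group actions through $\tilde i$.
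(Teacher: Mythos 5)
Your proposal is correct, but its execution differs from the paper's in a way worth noting. The paper works entirely inside the explicit path-class model of the covering spaces: $\tilde X$ is built as equivalence classes $[\gamma]_{\tilde X}$ of paths from the basepoint (with $\gamma\sim\eta$ iff $\gamma(1)=\eta(1)$ and $[\gamma\bar\eta]\in\ker\phi$), the inclusion is defined directly by $\tilde i\colon[\gamma]_{\tilde X}\mapsto[i\gamma]_{\tilde Y}$, and then both well-definedness and injectivity are two-line computations: well-definedness uses the commuting square ($\psi i_*[\gamma\bar\eta]=\rho\phi[\gamma\bar\eta]=1$), and injectivity uses injectivity of $i$ (to match endpoints) and of $\rho$ (to conclude $[\gamma\bar\eta]\in\ker\phi$). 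You instead treat the classification theorem, the lifting criterion, and uniqueness of lifts as black boxes: the commuting square enters once to give $i_*\ker\phi\subseteq\ker\psi$, which is exactly the hypothesis of the lifting criterion producing $\tilde i$, and then injectivity requires the extra equivariance lemma $\tilde i(g\cdot\tilde x)=\rho(g)\cdot\tilde i(\tilde x)$ combined with the free and (by normality) transitive deck action on fibres. What your route buys is modularity and a conceptually sharper statement: the $\rho$-equivariance of $\tilde i$, which is implicit but never isolated in the paper, is precisely the ``replica symmetry is respected by the inclusion'' content of the theorem; the cost is that you must track basepoint conventions tying $G\cong\pi_1(X)/\ker\phi$ to the geometric deck action, which you correctly flag. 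What the paper's route buys is self-containedness: no appeal to the lifting criterion or to deck-action transitivity is needed, since the path model makes every verification an unwinding of definitions. Both proofs use the hypotheses in the same two places (the square for existence, injectivity of $i$ and $\rho$ for injectivity of $\tilde i$), so the arguments are logically parallel even though the machinery differs.
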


In the case when $G=H$($=\mathbb{Z}_q$, say), and the groups are finite, $\rho$ is an isomorphism so we can identify the two groups. This is the way we use the theorem above. A similar remark applies for the second theorem.

\begin{proof}
Denote homotopy equivalence classes of loops by $[\cdot]$, and reversal of curves by $\bar\cdot$.

We construct $\tilde X$ as the set of equivalence classes $[\gamma]_{\tilde X}$ of curves $\gamma:[0,1]\to X$ starting at the basepoint in $X$, where $\gamma\sim_{\tilde X}\eta$ if (i) $\gamma(1)=\eta(1)$ and (ii) $[\gamma\bar\eta]\in\ker(\phi)$. The projection map is $p:[\gamma]_{\tilde X}\mapsto \gamma(1)$, which is clearly well-defined. This is the standard construction of the covering space with fundamental group $\ker(\phi)$. The deck transformation group is $\pi_1(X)/\ker(\phi)\cong G$, since $\phi$ is surjective. Similar statements apply to construct $\tilde Y$.

It remains only to construct the inclusion map $\tilde{i}$, by $\tilde{i}:[\gamma]_{\tilde X}\mapsto [i\gamma]_{\tilde Y}$. This is well defined, since if $\gamma\sim_{\tilde X}\eta$, then (i) $i\gamma(1)=i\eta(1)$, and (ii) $\psi[i\gamma\overline{i\eta}]=\psi i_*[\gamma\bar\eta]=\rho\phi[\gamma\bar\eta]=1$ since $[\gamma\bar\eta]\in\ker\phi$, so $[i\gamma\overline{i\eta}]\in\ker\psi$. Finally, $\tilde{i}$ is injective, since if $[i\gamma]_{\tilde Y}=[i\eta]_{\tilde Y}$, then (i) $i\gamma(1)=i\eta(1)\Rightarrow\gamma(1)=\eta(1)$ by injectivity of $i$, and (ii) $1=\psi[i\gamma\overline{i\eta}]=\psi{i}_*[\gamma\bar\eta]=\rho\phi[\gamma\bar\eta]\Rightarrow[\gamma\bar\eta]\in\ker(\phi)$, where we have used injectivity of $\rho$ in the last step.
\end{proof}

\begin{theorem}\label{thm:LMimpliesLift}
Suppose that there are normal covering spaces $p:\tilde X\to X$, and $P:\tilde Y \to Y$, with deck transformation groups $G,H$ and injective inclusion maps $i:X\to Y$ and $\tilde{i}:\tilde X \to \tilde Y$ such that
\begin{equation*}
\begin{tikzcd}
\tilde X \arrow{r}{p} \arrow{d}{\tilde{i}}
&X \arrow{d}{i}\\
\tilde Y \arrow{r}{P} &Y
\end{tikzcd}
\end{equation*}
commutes. Then an injective homomorphism $\rho:G\to H$ exists, such that the following diagram commutes:
\begin{equation*}
\begin{tikzcd}
\pi_1(\tilde X) \arrow{r}{p_*} \arrow{d}{\tilde{i}_*}
&\pi_1(X) \arrow{r}{\phi} \arrow{d}{i_*}
&G \arrow{r} \arrow[dashed]{d}{\rho}
&1\\
\pi_1(\tilde Y) \arrow{r}{P_*} &
\pi_1(Y) \arrow{r}{\psi} &
H \arrow{r}& 1
\end{tikzcd}
\end{equation*}
The maps $\phi$ and $\psi$ here are the quotient maps by the fundamental groups of the covering spaces (using the fact that the coverings are normal), which implies that the rows are exact.
\end{theorem}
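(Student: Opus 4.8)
The plan is to observe that, once the two rows are exact, the map $\rho$ is not really a free choice: since $\phi$ is the quotient homomorphism of $\pi_1(X)$ by $\im p_* = \ker\phi$, any $\rho$ satisfying $\rho\circ\phi = \psi\circ i_*$ is forced to be the unique map induced on the quotient by $\psi\circ i_*$. So the entire content of the theorem reduces to comparing two subgroups of $\pi_1(X)$: the inclusion $\ker\phi \subseteq \ker(\psi\circ i_*)$ guarantees that $\rho$ is well defined, and the reverse inclusion $\ker(\psi\circ i_*)\subseteq\ker\phi$, together with surjectivity of $\phi$, guarantees that $\rho$ is injective. I would begin by recording that normality of the two covers is exactly what makes $\phi$ and $\psi$ genuine quotient homomorphisms onto $G$ and $H$, so that $\ker\phi = \im p_*$ and $\ker\psi = \im P_*$; this is the meaning of exactness of the rows and lets me work purely with kernels.

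For the first inclusion (existence of $\rho$), I would apply the fundamental-group functor to the commuting square of spaces. Commutativity of $P\circ\tilde i = i\circ p$ yields $i_*\circ p_* = P_*\circ\tilde i_*$, hence $\im(i_*\circ p_*)\subseteq\im P_* = \ker\psi$. Thus $\psi\circ i_*$ annihilates $\im p_* = \ker\phi$, which is precisely $\ker\phi\subseteq\ker(\psi\circ i_*)$, so $\psi\circ i_*$ descends through the quotient. Setting $\rho(\phi(g)) := \psi(i_*(g))$ then makes the right-hand square commute by construction, while the left-hand square commutes because it is nothing but the identity $i_*p_* = P_*\tilde i_*$ just established.

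The main work, and the step I expect to be the real obstacle, is the reverse inclusion $\ker(\psi\circ i_*)\subseteq\ker\phi$ giving injectivity, since it is here that the covering-space structure (and in particular injectivity of $\tilde i$) must be used rather than mere diagram-chasing. I would fix basepoints compatibly so that $\tilde i(\tilde x_0)=\tilde y_0$, take a loop $\gamma$ in $X$ with $[\gamma]\in\ker(\psi\circ i_*)$, and argue by lifting. Since $\psi(i_*[\gamma])=1$, the loop $i\gamma$ lies in $\ker\psi = \im P_*$, so it lifts through $P$ to an honest loop at $\tilde y_0$. Independently, $\gamma$ lifts through $p$ to a path $\tilde\gamma$ starting at $\tilde x_0$, and the relation $P\circ\tilde i = i\circ p$ shows $\tilde i\circ\tilde\gamma$ to be a $P$-lift of $i\gamma$ starting at $\tilde y_0$ as well. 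Unique path lifting against the covering $P$ forces these two lifts to coincide, whence $\tilde i(\tilde\gamma(1)) = \tilde y_0 = \tilde i(\tilde x_0)$; injectivity of $\tilde i$ then gives $\tilde\gamma(1)=\tilde x_0$, so $\tilde\gamma$ is a loop and $[\gamma]\in\im p_* = \ker\phi$. The delicate points to get right are the compatibility of basepoints and the precise invocation of unique path lifting for $P$; this is exactly where injectivity of $\tilde i$ becomes essential, and it is the ingredient that upgrades the bare commuting square \eqref{eq:ConsCondAA} into a faithful encoding of the covering data.
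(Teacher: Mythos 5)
Your proof is correct and follows essentially the same route as the paper's: both define $\rho$ as the map induced on the quotient by $\psi\circ i_*$ (your kernel inclusion $\ker\phi\subseteq\ker(\psi\circ i_*)$ is the paper's well-definedness check), and both establish injectivity by lifting $\gamma$ to a path in $\tilde X$, identifying $\tilde i\circ\tilde\gamma$ with the $P$-lift of $i\gamma$ via commutativity and unique path lifting, and invoking injectivity of $\tilde i$. The only cosmetic difference is that you argue the injectivity step in contrapositive form (a kernel inclusion) and are more explicit about basepoints, which if anything tightens the paper's presentation.
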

\begin{proof}
Let $g\in G$. Since $\phi$ is surjective, there is some loop $\gamma$ in $X$ so that $\phi([\gamma])=g$. Define $\rho(g)=\psi i_*[\gamma]$. We must check that this is well defined, so suppose that $\eta$ is another loop in $X$ with $\phi([\eta])=g$. Then $\phi([\gamma\bar\eta])=\phi([\gamma][\eta]^{-1})=1$. This means that $\gamma\bar\eta$ lifts to a loop $\tilde\gamma$ in $\tilde X$, by exactness of the top row. Now $i_*[\gamma\bar\eta]=i_*p_*[\tilde\gamma]=P_*\tilde{i}_*[\tilde\gamma]\in\ker(\psi)$ by exactness of the bottom row, so $\psi i_*[\gamma]=\psi i_*[\eta]$. It's also clear from the definition that $\psi i_*=\rho\phi$.

Finally, we check injectivity of $\rho$, so let $g\in G$ be distinct from the identity. Pick a loop $\gamma$ with $\phi[\gamma]=g$, and lift it to a curve $\tilde\gamma$ in $\tilde X$. Since $g$ is not the identity, this is not a loop, and since $\tilde{i}$ is injective, the curve $\tilde{i}\tilde\gamma$ in $\tilde Y$ is also not a loop. By the commutativity, this is the same curve as the lift of $i\gamma$, which implies that $[i\gamma]\notin\im(P_*)=\ker(\psi)$. So $\rho(g)=\psi([i\gamma])$ is not the identity, and $\rho$ is injective.
\end{proof}

\section{Review of algebraic topology}
\label{sec:algebraictopology}

This appendix reviews the algebraic topology required for the main result of the paper. See \cite{Hatcher:2002} for more detailed discussions.

The space of singular $n$-chains $C_n(X)$ of a topological space $X$ is the free abelian group with basis singular $n$-simplices, which are maps from the standard $n$-simplex $\Delta^n=\{(t_0,\ldots,t_n)\in \mathbb{R}^{n+1}|\sum_i t_i =1, t_i\geq 0\}$ to $X$. The boundary maps $\partial_n:C_n(X)\to C_{n-1}(X)$ act on chains $\sigma$ by a sum of restrictions of $\sigma$ to its $n+1$ faces, with apropriate signs. A cycle is defined as a chain with zero boundary, in the kernel of $\partial$, and a boundary is a chain in the image of $\partial$. Since $\partial_n\partial_{n+1}=0$, all boundaries are cycles, and this means we can define the singular homology groups as cycles modulo boundaries: $H_n(X)=\ker \partial_n / \im \partial_{n+1}$.

A generalization of this is a relative homology group, where we ignore what goes on in some part of the space. Let $A$ be a subset of the space $X$; the space of relative chains is the space of chains in $X$ modulo chains in $A$, $C_n(X,A)=C_n(X)/C_n(A)$. The usual boundary map continues to be well-defined between relative chains, so we may again take cycles modulo boundaries to get the relative homology groups $H_n(X,A)$. Cycles here may therefore have a boundary in the subspace $A$, and it is trivial if it is homologous to a cycle completely within $A$. Often the space $A$ will be the complement of some set, $A=X-B$, when $H_n(X,X-B)$ is sometimes referred to as `homology at $B$', since it ignores what happens to chains except at the subspace $B$.

Since the relative homology ignores what goes on in some subspace, nothing is lost by removing some part of that subspace. This is made precise by the excision theorem, which states that if $Z\subseteq A\subseteq X$, and the closure of $Z$ is contained in the interior of $A$, then $H_n(X,A)$ is isomorphic to $H_n(X-Z,A-Z)$.

Of central importance is the fact that relative homology groups fit into a long exact sequence:
\begin{equation}
\cdots\xrightarrow{\partial}H_n(A)\xrightarrow{i_*}H_n(X)\xrightarrow{j_*}H_n(X,A)\xrightarrow{\partial}H_{n-1}(A)\xrightarrow{i_*}
\cdots
\end{equation}
The maps $i_*$ here are inclusions of cycles in $A$ into $X$, and the maps $j_*$ are the quotients by chains in $A$ to get to relative homology. Finally, the maps $\partial$ are boundary maps: any relative $n$-cycle must have boundary contained in $A$ by definition, and this gives a homology class in $H_{n-1}(A)$. Exactness is geometrically very intuitive: for example, a cycle in $H_n(A)$ gives zero when it is included into $H_n(X)$ if and only if it is the boundary of some chain in $X$, which is the case if and only if it is in the image of the boundary map $\partial$ from the relative homology group. So $\ker i_*=\im \partial$.

This can be slightly generalized to the long exact sequence of the triple $(X,A,B)$, where $B\subseteq A\subseteq X$, by doing everything relative to the smallest subspace $B$:
\begin{equation}
\cdots\xrightarrow{\partial}H_n(A,B)\xrightarrow{i_*}H_n(X,B)\xrightarrow{j_*}H_n(X,A)\xrightarrow{\partial}H_{n-1}(A,B)\xrightarrow{i_*}\cdots
\end{equation}

Now most of the work we will do will be not in terms of homology, but cohomology, to which all of the above carries over. To pass to cohomology, all the relevant spaces and maps should be dualized: we consider spaces of cochains $C^n(X)=\hom(C_n(X),\mathbb{Z})$, joined by coboundary maps $\delta_n:C^n(X)\to C^{n+1}(X)$ dual to the boundary maps, so $\delta f (x)=f(\partial x)$. The kernel of $\delta$ gives the cochains, and the image the coboundaries; the cohomology groups are cochains mod coboundaries $H^n(X)=\ker \delta_n / \im \delta_{n-1}$.

The constructions for relative homology groups carry over to cohomology, by dualizing the relative chain complex. In particular the excision theorem is the same in cohomology, and the long exact sequences of relative homology groups dualize, for example for the triple $(X,A,B)$ we have
\begin{equation}\label{eq:cohomologyTriple}
\cdots\xrightarrow{i^*}H^{n-1}(A,B)\xrightarrow{\delta}H^n(X,A)\xrightarrow{j^*}H^n(X,B)\xrightarrow{i^*}H^{n}(A,B)\xrightarrow{\delta}\cdots
\end{equation}

While most of the argument is phrased in terms of cohomology, in the end we want to translate the result into homology language. The crucial tool to do this is a generalization of Poincar\'e duality. Geometrically, these dualities can be roughly thought of as taking a submanifold to a function on submanifolds of complementary dimension, counting with signs the number of intersections between the submanifolds.

This requires the cap product between chains and cochains, defined by
\begin{gather}
 \frown: \, C_k(X) \times C^l (X) \longrightarrow C_{k-l}(X) \\
 \sigma \frown \varphi = \varphi(\sigma_{[v_0,\ldots,v_l]}) \cdot \sigma_{[v_l,\ldots v_k]}\nonumber
\end{gather}
where $k\geq l$ and $\sigma_{[v_0,\ldots, v_l]}$ denotes the restriction of $\sigma$ to the simplex spanned by the vertices $v_0,\ldots,v_l$. This map induces a cap product between homology and cohomology classes.

A connected closed orientable $n$-manifold $X$ has top homology group $H_n(X)$ isomorphic to $\mathbb{Z}$, and an orientation of $X$ is equivalent to a generator of this group, a \emph{fundamental class} $[X]$. The Poincar\'e duality map is given by the cap product with the fundamental class $[X]\frown:H^k(X)\to H_{n-k}(X)$, which is in fact an isomorphism between complementary homology and cohomology groups.

This duality generalizes to manifolds with boundary. Crucial to this is a relative version of the cap product. Using the same definitions as before, it can be checked that the product on chains and cochains induces a product on relative (co)homology groups
\begin{equation}
 \frown: \, H_k(X,A\cup B) \times H^l (X,A) \longrightarrow H_{k-l}(X,B)
\end{equation}
where $A$ and $B$ are open sets in $X$. The duality theorem we need applies this in the case of an oriented $n$-manifold $X$ with boundary $\partial X$, where the boundary is decomposed as $\partial X=A\cup B$, where $A$ and $B$ are $(n-1)$-dimensional manifolds with common boundary $A\cap B=\partial A=\partial B$. An orientation of $X$ is defined in this case by a generator of $H_n(X,\partial X)$, which as before gives us isomorphisms
\begin{equation}\label{eq:PoincareLefschetz}
[X]\frown:H^k(X,A)\xrightarrow{\approx} H_{n-k}(X,B).
\end{equation}
This includes the special cases where $A=\emptyset$ and $B=\partial X$, or vice-versa, known as Poincar\'e-Lefschetz duality.

\providecommand{\href}[2]{#2}\begingroup\raggedright\endgroup

\end{document}